\def\conference{0}
\newtheorem{theorem}{Theorem}
\newtheorem{corollary}[theorem]{Corollary}
\newtheorem{algorithm}{Algorithm}
\newenvironment{labellist}[1][A]
{\begin{list}{{#1}\arabic{enumi}.}{\usecounter{enumi}\addtolength{\leftmargin}{-1ex}
      \addtolength{\labelwidth}{\widthof{{#1}5}}}} 
{\end{list}}
\newcommand{\LAT}{\operatorname{Lat}}
\renewcommand{\min}{{\operatorname{min}}}
\renewcommand{\max}{{\operatorname{max}}}
\DeclareMathOperator{\E}{E}
\DeclareMathOperator{\poly}{poly}
\DeclareMathOperator{\conv}{conv}
\newcommand{\R}{\ensuremath{\mathbb R}}
\newcommand{\I}{\ensuremath{\mathcal I}}
\newcommand{\Qc}{\ensuremath{\mathcal Q}}
\newcommand{\OPT}{\ensuremath{\mathit{OPT}}}
\newcommand{\sm}{\ensuremath{\setminus}}
\newcommand{\es}{\ensuremath{\emptyset}}
\newcommand{\sse}{\subseteq}
\newcommand{\cg}{\ensuremath{\mathit{CG}}}
\newcommand{\lb}{\ensuremath{\mathsf{LB}}}
\newcommand{\into}{\ensuremath{\mathrm{in}}}
\newcommand{\out}{\ensuremath{\mathrm{out}}}
\newcommand\Time{\ensuremath{\mathsf{T}}}
\newcommand{\gm}{\ensuremath{\gamma}}
\newcommand{\ld}{\ensuremath{\lambda}}
\newcommand{\dt}{\ensuremath{\delta}}
\newcommand{\assign}{\ensuremath{\leftarrow}}
\newcommand{\inpsize}{\ensuremath{\I}}
\begin{document}
\title[Minimizing Latency in Online Services]{Minimizing Latency in Online Ride and Delivery Services}
\if\conference1
\titlenote{The full version of the paper is available at TODO.}
\else
\titlenote{A short version of the paper is to appear at the 27th Web Conference (formerly, World Wide Web Conference), 2018.}
\fi

\author{Abhimanyu Das}
\affiliation{%
  \institution{Google Research}
}
\email{abhidas@google.com}

\author{Sreenivas Gollapudi}
\affiliation{%
  \institution{Google Research}
}
\email{sgollapu@google.com}

\author{Anthony Kim}
\authornote{Part of this work was done when the author was an intern at Google, Inc.}
\affiliation{%
  \institution{Stanford University}
}
\email{tonyekim@stanford.edu}

\author{Debmalya Panigrahi}
\affiliation{%
  \institution{Duke University}
}
\email{debmalya@cs.duke.edu}

\author{Chaitanya Swamy}
\affiliation{%
  \institution{University of Waterloo}
}
\email{cswamy@uwaterloo.ca}

\renewcommand{\shortauthors}{Das et al.}

\begin{abstract}
Motivated by the popularity of online ride and delivery services, we study natural variants of classical multi-vehicle minimum latency problems where the objective is to route a set of vehicles located at depots to serve request located on a metric space so as to minimize the total latency. In this paper, we consider point-to-point requests that come with source-destination pairs and release-time constraints that restrict when each request can be served. The point-to-point requests and release-time constraints model taxi rides and deliveries. For all the variants considered, we show constant-factor approximation algorithms based on a linear programming framework. To the best of our knowledge, these are the first set of results for the aforementioned variants of the minimum latency problems. Furthermore, we provide an empirical study of heuristics based on our theoretical algorithms on a real data set of taxi rides.
\end{abstract}

%
%
\begin{CCSXML}
<ccs2012>
<concept>
<concept_id>10003752.10003809.10003636.10003811</concept_id>
<concept_desc>Theory of computation~Routing and network design problems</concept_desc>
<concept_significance>500</concept_significance>
</concept>
<concept>
<concept_id>10003752.10003809.10003716.10011136</concept_id>
<concept_desc>Theory of computation~Discrete optimization</concept_desc>
<concept_significance>300</concept_significance>
</concept>
</ccs2012>
\end{CCSXML}

\ccsdesc[500]{Theory of computation~Routing and network design problems}
\ccsdesc[300]{Theory of computation~Discrete optimization}

\keywords{Vehicle Routing, Minimum Latency Problem, Online Ride Services}

\maketitle

\section{Introduction} \label{sec:intro}

In recent years, ride-sharing platforms such as Lyft, Ola and Uber, and online delivery services such as DoorDash and GrubHub have become increasingly popular and have expanded their operations to many cities and countries. A central problem common to these online services is the vehicle routing problem where a fleet of vehicles are routed to serve ride and delivery requests over a geographical area. Indeed, this problem is also at the core of traditional city taxi services, such as Yellow Cab, where taxis are routed to serve ride requests received over the phone or the Internet. In all these settings, a request comprises a pair of source and destination locations, such as the rider's starting and ending coordinates for taxi services, and the restaurant and customer addresses for food delivery services. Furthermore, these ``point-to-point'' requests typically have release-time constraints, i.e., the customer specifies a desired service time before which the request cannot be serviced. The vehicle routing algorithm desires to minimize the average latency of customers, i.e., the difference between their requested service time and the actual service time. In this paper, we formally define this multi-vehicle routing problem, obtain an algorithm based on a linear programming framework with a formal guarantee on its performance, and demonstrate that heuristics based on this formal algorithm improve on benchmark greedy solutions on real data sets of city taxi rides.
Most current systems, such as the online\footnote{Note that we use ``online'' to refer to the requests being generated by web-based services, and not to refer to requests arriving ``online'' in an algorithmic sense.}  and traditional services mentioned above, employ various heuristics for a batch of requests to solve this kind of vehicle routing problems in practice. In contrast, there is a rich history in the algorithmic literature on the so-called {\em vehicle routing problems} (or VRP), which covers a wide range of routing problems for one or more vehicles under a variety of constraints. For these problems, the formal literature comprises a wide array of sophisticated techniques, often based on linear programing formulations, that lead to approximation algorithms with provable guarantees. While we are not aware of any previous results on our exact problem formulation with point-to-point requests and release-time constraints, the related literature raises the natural question: can these sophisticated algorithmic techniques be brought to bear on this important practical problem of minimizing latency for point-to-point requests with release-time constraints? And if so, do these algorithmic ideas also lead to better heuristics in practice? We answer both these questions affirmatively by designing a constant approximation algorithm for this problem, which also leads to a heuristic that outperforms a natural greedy strategy.

\subsection{Our Contributions}

We formally define the minimum latency problems in Section~\ref{sec:prob-formulation}. In Section~\ref{sec:lp-framework}, we present the linear programming framework due to \citet{PS15} that will be central in our algorithms and analyses. Our theoretical contributions are to obtain {\bf constant approximation algorithms} for the following problems:
\begin{enumerate}
\item (Section~\ref{sec:single-depot-kmlp}) For the minimum latency problem (MLP) and single-depot, multi-vehicle minimum latency problem ($k$-MLP) with point-to-point requests.
\item (Section~\ref{sec:multi-depot-kmlp}) For the multi-depot, multi-vehicle minimum latency problem ($k$-MLP) with point-to-point requests.
\item (Section~\ref{sec:kmlp-release-results}) For the above problems with release-time constraints for both point and point-to-point requests.
\end{enumerate}

Additionally, we perform a large-scale empirical analysis for the minimum latency problem by comparing our algorithms with a natural greedy baseline using two real-world taxi data sets.  We show that our algorithms outperform the baseline on a set of different metrics including latency, tour length, and utilization (active time) of the cabs in the system.

See Table~\ref{tbl:results} for a summary of our theoretical results with the precise constant approximation factors. 
To the best of our knowledge, {\em these results are the first polynomial-time approximation guarantees for the respective minimum latency problems}.
For multi-depot problems, we obtain approximation guarantees with somewhat large constants via a constant-factor reduction (of ratio 3). We believe better approximation ratios are possible with use of more complicated constructs, but we do not explore these in this paper to keep our algorithms viable in practice.

\smallskip
\noindent
{\em ``Client-side'' vs ``Platform-side'' objectives:}
 Note the average latency objective that we study in this paper is a ``client-side'' objective whereas the total distance traveled by the vehicles, commonly studied in problems such as the well-known {\em traveling salesman problem} (or TSP), is a ``platform-side'' objective. We can interpret the average latency as the average waiting time from the clients' perspective and the total distance as the platform's operation cost for fuel, etc. Our routing problem with point-to-point requests can be thought of as the ``client-side'' counterpart of what is known as the Dial-a-Ride problem with unit-capacity vehicles, where vehicles serve requests with point-to-point requests as in our problem, but seek to minimize the ``platform-side'' objective of total travel distance. The latency minimization problem with point requests subject to release-time constraints has been proposed as an open problem in \cite{T92} and there exist polynomial-time approximation schemes for special cases such as a constant number of vehicles, or if the metric space has a special structure such as the Euclidean plane or weighted trees~\cite{S14}. Our result is the first one for a general metric space and an arbitrary number of vehicles, and generalizes further to point-to-point requests.


\begin{table*}[!t]
\makebox[\textwidth][c]{
\begin{tabular}{|l|c|c|c|c|}
\hline
	& P Reqs. & P2P Reqs. & P Reqs. w/ RTs & P2P Reqs. w/ RTs \\
\hline
\hline
MLP & 3.592 (\cite{CGRT03}) & 3.592 & 7.183 & 7.183 \\
Single-Depot $k$-MLP & 7.183 (\cite{PS15}) & 8.978 & 7.183 & 8.978 \\
Multi-Depot $k$-MLP & 8.497 (\cite{PS15}) & 25.488 & 13.728 & 41.184 \\
\hline
\end{tabular}}
\caption{The state-of-the-art approximation guarantees for various minimum latency problems (MLP/$k$-MLP) with point (P) requests and point-to-point (P2P) requests and with or without release times (RTs). Except those in the first column, the constant-factor approximation ratios are new and due to this paper.}
\label{tbl:results}
\end{table*}

\subsection{Related Work}

Our problem is an example of a vehicle routing problem, which is a generic term used to describe a wide range of routing problems over metric spaces. Of particular relevance to our context is the {\em minimum latency problem} (or MLP), also known as the traveling repairman problem or the delivery man problem, which has applications in diskhead scheduling and searching information in a network such as the world wide web~\cite{ALM00,KPY96}. This problem is a special case of our problem, where requests are at point locations (instead of point-to-point requests) and there are no release-time constraints. 
MLP and $k$-MLP (respectively, one or $k$ vehicles) have been long studied in both the Operations Research and Computer Science communities. MLP was shown to be NP-complete and then MAXSNP-hard for general metrics~\cite{SG76,BCCPRS94,PY93}. In fact, it is NP-hard even when the metric is a weighted tree with $\{0, 1\}$ weights~\cite{S02} and is thought to be harder than the well-known traveling salesman problem.\footnote{The traveling salesman problem is MAXSNP-hard for general metrics but is solvable in polynomial time in the case of weighted tree metrics (via an Eulerian tour).}

There have been many works focused on exactly solving MLP/$k$-MLP and related problems, albeit not in polynomial time. A number of mixed integer formulations have been proposed and exact methods such as cutting-plane algorithms and branch-cut-and-price algorithms (e.g., \cite{MZL08,AFPU13,LQL14}) and various meta-heuristics (e.g., \cite{MUH13,SSVO12,NMAM16}) have been proposed. More recently, several mixed integer formulations for $k$-MLP have been proposed and experimented with on routing and scheduling instances with the number of nodes ranging up to 80 \cite{ACA17}.

The first constant-factor approximation algorithm for MLP was obtained in \cite{BCCPRS94} and the approximation factor was subsequently improved to $3.592$ in a series of work~\cite{GK96,ALW08,CGRT03}. Similarly, we have constant-factor approximation algorithms for the multi-vehicle version, $k$-MLP, for both the multi-depot and single-depot variants due to a series of work~\cite{CK04,FHR03,CGRT03,PS15}: the current best-known approximation factors are $8.497$ and $7.183$ respectively. For several special cases, stronger guarantees are known. A quasi-polynomial time approximation scheme was known for weighted trees and Euclidean metrics in any finite dimensions \cite{AK03}, and, more recently, a polynomial time approximation scheme was shown for weighted trees and the Euclidean plane for MLP and single-depot $k$-MLP for any constant $k$ \cite{S14}. 

MLP/$k$-MLP are also closely related to other vehicle routing problems such as the traveling salesman problem, orienteering (cf. \cite{CKP12}) and the Dial-a-Ride problem (cf. \cite{DLSSS04}). They are also related to many sequencing problems with minimum total (weighted) completion time objective in the scheduling literature (cf. \cite{GLLR79,S14}). There is a large body of work on vehicle routing problems and scheduling problems beyond the scope of this paper. For further details, we refer to the above work and references therein.

Finally, we mention several work among many on other aspects of the ride and delivery services from the Data Mining and Artificial Intelligence communities. Different taxi dispatching strategies and route-recommendation systems have been studied in order to minimize passengers' waiting times (cf. \cite{AAR09,ZLX12} in different settings from ours), to maximize drivers' profits (cf. \cite{QZLLX14}), and to guarantee fairness within a group of competing drivers (cf. \cite{QCMSL15}). To address inefficiencies in taxi systems, several graph-based models and algorithms have been designed to minimize the total number of required taxis and to reduce the total idle time of taxi drivers (cf. \cite{ZQU14,ZP17}). Dynamic variants where ride requests arrive on demand have also been studied (cf. \cite{SX13}).

\section{Problem Formulation} \label{sec:prob-formulation}

We define the multi-vehicle minimum latency problem ($k$-MLP) with point-to-point requests as follows. Let $G = (V, E)$ be a weighted complete undirected graph with a distance function on edges, $c: E \rightarrow \mathbb R^+$, that forms a metric space. There are $n$ point-to-point requests where each request $R_i$ is given by a pair $(s_i, d_i)$ of source and destination nodes and to be satisfied without interruption, that is, a vehicle serves it by first going to the source node and then to the destination node directly. There are $k$ vehicles located at respective designated depot nodes, equivalently, root nodes, $r_1, \ldots, r_k$. The objective is to find $k$ paths $P_1, \ldots, P_k$ starting from respective depots that serve all the requests and minimize the {\em total latency}, that is, the sum of latencies of requests, where the latency of a request is equal to the distance from the depot to the destination of the request on the path of the vehicle that serves it.\footnote{Note the total latency and average latency differ by the factor of $n$ and optimizing with respect to both objectives are equivalent.}

After appropriate scaling and rounding, we may assume $c_e$ are integers and $c_{r_i s_j} + c_{s_j d_j} \geq 1$ for every root node $r_i$ and request $R_j$. For ease of exposition, we assume the distances $c_e$ are given in a unit of time and interpret the latency of a request to be its completion time, following the job scheduling literature. 

The problem thus defined is the most general version to be studied in this paper and we refer to it as {\em multi-depot $k$-MLP with point-to-point requests}. We refer to the case when there is a single depot $r_0$ for all vehicles (i.e., $r_0 = r_1 = \cdots = r_k$) as {\em single-depot $k$-MLP with point-to-point requests}, and the case with exactly one vehicle (i.e., $k=1$) as {\em MLP with point-to-point requests}. When the requests' source and destination nodes are identical, we have the classical $k$-MLP and we refer to them as {\em (multi-depot/single-depot) $k$-MLP with point requests} (e.g., \cite{PS15}) to distinguish them from the more general problems with point-to-point requests. 

MLP/$k$-MLP {\em with release times} is MLP/$k$-MLP with additional release-time constraints. Each request $R_i$ has release time $T_i$ such that it cannot be visited before time $T_i$. Both point and point-to-point requests can be considered with release times. When a vehicle is at the source node of a request, it may wait there until when the request become available at its release time. When clear from the context, we refer to these problems with shorter names.

\paragraph*{Notations.}
We refer to the designated nodes using the $r_i$, $s_i$ and $d_i$, and to arbitrary nodes of any kind using generic indexing variables such as $u$ and $v$. We use $\inpsize$ to denote the input size. A path may start from and end at different nodes and a tour must start from and end at the same node. We represent a path/tour by the sequence of nodes on the path/tour or simply by the sequence of requests in the order served if it is a vehicle's route. For example, if the order of the requests is $R_1 \cdots R_n$, the corresponding path is $P = r_0 s_1 d_1 \cdots s_n d_n$ (assuming the root $r_0$). For a path $P$, let $\LAT(P, i)$ be the latency of the $i$-th request on the path, i.e., the distance along the path from the root of the path to the destination of the request, and $\LAT(P) = \sum_{i=1}^n \LAT(P, i)$ be the total latency of the path. The total latency objective is the sum of the latencies of the paths of the $k$ vehicles. Note that if there is one vehicle located at the root $r_0$ that serves the all the requests in the order $R_1 \cdots R_n$, then
\[
\LAT(P, i) = \begin{cases} \LAT(P, i-1) + c_{d_{i-1} s_i} + c_{s_i d_i}, & i > 1 \\
	0, & i = 0 \end{cases} \,,
\]
where $d_0 = r_0$.

Given a set of edges $Q$ and a distance function $c$, let $c(Q)$ be the sum of the lengths given by $c$ of the edges in $Q$; if $P$ is a path, then $c(P)$ is the length of the path. We frequently perform concatenation and shortcutting operations on paths and tours. If two paths/tours meet at the same node, that is, one ending at and the other starting from the same node, we may concatenate them back to back to create a longer path/tour. We may shortcut a path/tour to avoid visiting a node twice by skipping it; this leads to a path/tour of at most the original length if the distances satisfy the triangle inequality, which holds in metric spaces.

\section{LP Framework} \label{sec:lp-framework}

We describe the linear programming framework due to \citet{PS15} for single-depot $k$-MLP and MLP with point requests. Some of our algorithms utilize a directed metric, so we describe their LP in this general setting. Linear programs for multi-depot $k$-MLP are also given in \cite{PS15} and are different, but we primarily focus on the linear program for the single-depot case in this paper. Let $r$ denote the single depot for the point-request version.

Given a problem instance with point requests, let digraph $D = (V, A)$ with arc-costs $\{c_{u,v}\}_{u,v\in V}$ represent the underlying directed metric. (If we have an undirected metric, we simply bidirect the edges to obtain $D$, setting $c_{u,v}=c_{v,u} = c_{uv}$.) 
We use $a$ to index the arcs in $A$, $v$ to index nodes in $V\sm \{r\}$, $i$ to index the $k$ vehicles, and $t$ to index time units in $[\Time]$. We use variables $x^i_{v,t}$ to denote if node $v$ is visited at time $t$ by the route originating at the root. Directing the vehicles' routes away from the root in a solution, $z^i_{a,t}$ indicates if arc $a$ lies on the portion of vehicle $i$'s route up to time $t$. 
Let $\Time$ be an easily certifiable upper bound on the maximum latency of a request. Consider the following LP. We will be able to ensure that either: (a) $\Time$ is polynomially bounded by scaling and rounding the metric (e.g., \cite{AK03}) while losing a $(1+\epsilon)$-factor, in which case this LP can be solved efficiently; or (b) $\log\Time$ is polynomially bounded, and use ideas from \cite{PS15} to obtain a $(1+\epsilon)$-approximate solution to this LP. 

Constraints \eqref{jasgn3} ensure that every non-root node is visited at some time, and constraints \eqref{rootdist} ensure that each node cannot be visited before the distance from the root is covered. 
Constraints \eqref{jcov3}--\eqref{path3} are for the vehicles' routes: \eqref{jcov3} ensures that the portion of a
vehicle's route up to time $t$ must visit every node visited by that vehicle by time $t$,
\eqref{onep3} ensures that this route indeed has length at most $t$, and finally
\eqref{path3} seeks to encode that the route forms a path. (Note that constraints
\eqref{path3} are clearly valid, and one could also include the constraints
$\sum_{a\in\dt^{\out}(r)}z^i_{a,t}\leq 1$ for all $i$, $t$.) 

\begin{alignat}{3}
\min & \enspace & \sum_{v,t,i} tx^i_{v,t} & \tag{LP} \label{eq:lp} \\
\text{s.t.} && \sum_{t,i} x^i_{v,t} & \ge 1 \quad && \forall v \label{jasgn3} \\ 
&& x^i_{v,t} &=0\ \text{if $c_{r,v}>t$} \quad && \forall v,t,i \label{rootdist} \\
&& \sum_{a\in\dt^{\into}(S)}z^i_{a,t} & \ge \sum_{t'\le t} x^i_{v,t'} 
\quad && \forall S\subseteq V\sm\{r\}, v\in S, \forall t, i \label{jcov3} \\
&& \sum_{a} c_az^i_{a,t} & \le t \quad && \forall t,i \label{onep3} \\
&& \sum_{a\in\dt^{\into}(v)}\negthickspace\negthickspace z^i_{a,t} & \geq\sum_{a\in\dt^{out}(v)}\negthickspace\negthickspace z^i_{a,t} 
\quad && \forall v, t, i \label{path3} \\
&& x, z & \geq 0.
\end{alignat}

To round a fractional solution to a set of routes for vehicles, we use a polynomial-time arborescence packing result for weighted digraphs and the concatenation graph. The following result does not require the edge costs are symmetric or form a metric, but holds for arbitrary nonnegative edge costs.

\begin{theorem}[Theorem 3.1 in \cite{PS15}]\label{thm:arbpoly}
Let $D=(U+r,A)$ be a digraph with nonnegative integer edge weights $\{w_e\}$, 
where $r\notin U$ is a root node and $|\delta^{\into}(u)| \ge |\delta^{\out}(u)|$
for all $u \in U$. 
For any integer $K\geq 0$, one can find out-arborescences $F_1,\ldots,F_q$
rooted at $r$ and integers $\gm_1,\ldots,\gm_q$ in polynomial time such that  
$\sum_{i=1}^q\gm_i=K$, $\sum_{i:e\in F_i}\gm_i\leq w_e$ for all $e\in A$, 
and $\sum_{i:u\in F_i}\gm_i=\min\{K,\ld_D(r,u)\}$ for all $u\in U$.
\end{theorem}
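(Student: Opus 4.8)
The plan is to reduce the statement to a classical capacitated arborescence-covering theorem together with a matching max-flow upper bound. First I would fix the reading of $\ld_D(r,u)$ as the value of a maximum capacitated $r$-$u$ flow under the capacities $\{w_e\}$, equivalently the minimum of $w(\dt^{\into}(S))$ over all $S$ with $u\in S\subseteq U$; this is computable in polynomial time by one max-flow computation per node. Setting the demand $d_u:=\min\{K,\ld_D(r,u)\}$, the goal becomes: produce an integral packing of out-arborescences $F_i$ with multiplicities $\gm_i$ of total multiplicity $\sum_i\gm_i=K$, obeying $\sum_{i:e\in F_i}\gm_i\le w_e$, in which each $u$ is covered \emph{exactly} $d_u$ times. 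I would prove ``$\ge d_u$'' and ``$\le d_u$'' separately.

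For the lower bound I would invoke Frank's theorem on covering by arborescences in its capacitated form: given root $r$, capacities $\{w_e\}$ and demands $f:U\to\Z_{\ge 0}$ with $f\le K$, there is a packing of $r$-arborescences of total multiplicity $K$ respecting the capacities and covering each $u$ at least $f(u)$ times, provided $w(\dt^{\into}(S))\ge\max_{u\in S}f(u)$ for every nonempty $S\subseteq U$. I would verify this hypothesis for $f=d$: for any such $S$ and any $u\in S$, the set $S$ is an $r$-$u$ cut, so $w(\dt^{\into}(S))\ge\ld_D(r,u)\ge d_u$, and taking the maximum over $u\in S$ gives the condition. Any multiplicity needed to pad the total up to $K$ can be assigned to the trivial one-node arborescence $\{r\}$, which alters neither a coverage count nor a capacity count. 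This yields arborescences covering every $u$ at least $d_u$ times.

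For the matching upper bound, suppose $u$ is covered by a subcollection of the arborescences of total multiplicity $t$. Each such arborescence contains a unique directed $r$-$u$ path; routing $\gm_i$ units along the path inside $F_i$ and summing gives a flow whose load on any arc $e$ is at most $\sum_{i:e\in F_i}\gm_i\le w_e$, hence a feasible $r$-$u$ flow of value $t$. Therefore $t\le\ld_D(r,u)$, and trivially $t\le K$, so $t\le d_u$. Combining the two bounds forces coverage exactly $d_u=\min\{K,\ld_D(r,u)\}$, while the total-multiplicity and capacity requirements are already in hand.

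The step I expect to be the real obstacle is not correctness but efficiency: to output the packing in polynomial time I must generate only polynomially many distinct arborescences $F_1,\dots,F_q$, whereas a naive packing could have $q$ as large as $K$. Here I would rely on a polynomial combinatorial algorithm for capacitated arborescence covering (via submodular-flow or matroid-union methods, or a capacity-scaling peeling that repeatedly extracts a maximal arborescence over the currently under-covered nodes at its bottleneck multiplicity), which terminates in a polynomial number of extractions and so gives $q=\poly(\inpsize)$. The balance hypothesis $w(\dt^{\into}(u))\ge w(\dt^{\out}(u))$ for all $u\in U$ is exactly the condition enforced on the flow variables by constraint \eqref{path3}; I expect it to be used to keep residual capacities nonnegative and the number of rounds polynomial in the constructive algorithm, rather than in establishing the coverage identity, which rests only on the cut condition above.
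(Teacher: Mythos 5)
First, a point of reference: the paper never proves Theorem~\ref{thm:arbpoly} at all---it is imported verbatim from \cite{PS15}, where it is obtained as a weighted, polynomial-time version of the arborescence-packing theorem of Bang-Jensen, Frank, and Jackson (BFJ), whose hypothesis is exactly the degree condition $|\delta^{\into}(u)|\ge|\delta^{\out}(u)|$. Your upper-bound direction is correct and standard: routing each arborescence's unique $r$--$u$ path as flow shows coverage of $u$ is at most $\min\{K,\ld_D(r,u)\}$. The gap lies entirely in the existence (lower-bound) direction.

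The ``Frank covering theorem'' you invoke---that capacities $w$ and demands $f\le K$ admit a capacity-respecting packing of total multiplicity $K$ covering each $u$ at least $f(u)$ times whenever $w(\delta^{\into}(S))\ge\max_{u\in S}f(u)$ for every $S$---is not a theorem, and your whole argument stands or falls with it. If it were true, the degree hypothesis of Theorem~\ref{thm:arbpoly} would be vacuous: as you yourself observe, the cut condition holds automatically for $f(u)=\min\{K,\ld_D(r,u)\}$, so the hypothesis could simply be deleted from the statement of BFJ and of this theorem. Yet that hypothesis is precisely why \cite{PS15} carry constraint \eqref{path3} in their LP at all; the paper even remarks that \eqref{path3} is ``clearly valid,'' i.e., it is not needed for the relaxation---it exists solely so that the rounded arc weights $\{Kz'_{a,t}\}$ satisfy the hypothesis of this theorem. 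Moreover, in the generality in which you invoke it, your covering lemma would decide an NP-hard problem: take unit capacities, $K=2$, and $f\equiv 2$ on a terminal set $T$ (zero elsewhere). Capacity-feasibility then forces two arc-disjoint arborescences each containing all of $T$, and your lemma would characterize their existence by the polynomially checkable condition $\ld_D(r,t)\ge 2$ for all $t\in T$; deciding the existence of two edge/arc-disjoint Steiner trees or arborescences over a common terminal set is NP-complete, so no such cut characterization can exist unless $\mathrm{P}=\mathrm{NP}$. The cut condition is necessary but not sufficient; the BFJ degree condition is what restores sufficiency, and it is consumed in an essential, existential way in BFJ's splitting-off induction. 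Your closing claim---that the degree hypothesis is needed only ``to keep residual capacities nonnegative and the number of rounds polynomial,'' while the coverage identity ``rests only on the cut condition''---is exactly backwards: what remains to be done after BFJ is the polynomial-time weighted implementation (that is the actual contribution of \cite{PS15} here), whereas the existence statement itself is where the degree condition does its work.
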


The {\em concatenation graph} was introduced by \citet{GK96} and then extended by \citet{AB10} as a convenient mean of representing the concatenation process of constructing vehicles' routes from shorter paths. The concatenation graph corresponding to a sequence $w_1=0,\ldots,w_n$ of nonnegative numbers (that starts with a $0$), denoted $\cg(w_1,\ldots,w_n)$, is a directed graph with $n$ nodes and an arc $(i,j)$ of length $\bigl(n - \frac{i+j}{2} \bigr) w_j$ for all $i<j$. In our applications, a path through $\cg(w_1, \ldots, w_n)$ will correspond to the selection of certain partial solutions of smaller routes and the subsequent concatenation of these partial solutions to obtain a final solution of the vehicles' routes. The length of the path will upper bound the total latency of the final solution.

We say that $w_\ell$ is an {\em extreme point} of the sequence $(w_1,\ldots,w_n)$ if $(\ell,w_\ell)$ is an extreme-point of the convex hull of $\{(j,w_j): j=1,\ldots,n\}$. Given a point-set $C\sse\R_+^2$, define its {\em lower-envelope curve} 
$f:[\min_{(x,y)\in C}x,\max_{(x,y)\in C}x]\mapsto\R_+$  
by $f(x)=\min\{y: (x,y) \in\conv(C)\}$ where $\conv(C)$ denotes the convex hull of $C$.
We say that $(\ell,w_\ell)$ is a {\em corner point} of the lower-envelope curve of $\{(j,w_j): j=1,\ldots,n\}$ if $w_\ell$ is an extreme point of $(w_1,\ldots,w_n)$.

We have the following results on the concatenation graph:
\begin{theorem}[\cite{GK96,AB10}] \label{thm:concat_graph}
The shortest $1\leadsto n$ path in $\cg(w_1,\ldots,w_n)$ has length at most
$\frac{\mu^*}{2} \sum_{\ell=1}^n w_\ell$ where $\mu^* < 3.5912$ is the solution to $\mu \ln \mu = \mu + 1$. 
Moreover, the shortest path only visits nodes
corresponding to extreme points of $(w_1,\ldots,w_n)$. 
\end{theorem}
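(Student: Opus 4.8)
The plan is to establish the two assertions separately: the quantitative length bound, and the structural ``moreover'' claim that a shortest path uses only extreme points. For the length bound it suffices, since the shortest path is no longer than any particular $1\leadsto n$ path, to exhibit a single \emph{random} path whose expected length is at most $\frac{\mu^*}{2}\sum_{\ell=1}^n w_\ell$. The first convenient step is to rewrite the arc cost in a symmetric ``distance-from-the-end'' form: writing $D_v := n-v$ for each node $v$, the arc $(i,j)$ has length $\bigl(n-\tfrac{i+j}{2}\bigr)w_j=\tfrac{D_i+D_j}{2}\,w_j$, so a path $1=i_0<i_1<\cdots<i_m=n$ has length $\sum_{t=1}^m \tfrac{D_{i_{t-1}}+D_{i_t}}{2}\,w_{i_t}$. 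Thus the weight $w_v$ of a \emph{visited} node $v$ is charged a coefficient equal to the average of $D_v$ and the distance-from-the-end of the previously visited node; note that skipping nodes is not free, since it enlarges this average for the survivors, while visiting a node always adds its own term.

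The heart of the argument is to choose the visited set as a randomly shifted geometric ladder in the $D$-coordinate. Fix a base $\mu>1$, draw a phase $u$ uniformly from $[0,1)$, and let the path visit node $v$ exactly when $D_v=\lfloor\mu^{k+u}\rfloor$ for some integer $k\ge 0$ (together with the endpoints $v=n$, i.e.\ $D=0$, and $v=1$, which carries $w_1=0$). For a fixed node $v$ at distance $D:=D_v$, the probability that it is visited equals the measure of phases hitting it, namely $\log_\mu\tfrac{D+1}{D}=\tfrac{\ln(1+1/D)}{\ln\mu}\approx\tfrac{1}{D\ln\mu}$; and \emph{conditioned} on $v$ being visited, the preceding visited node sits at distance $\approx\mu D$, so the coefficient charged to $w_v$ is $\approx\tfrac{D+\mu D}{2}=\tfrac{(1+\mu)D}{2}$. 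Hence the expected coefficient on each $w_v$ is
\[
\Phi(D)\;=\;\E_u\bigl[\text{coeff}(v)\cdot\mathbf 1[v\text{ visited}]\bigr]\;\lesssim\;\frac{1}{D\ln\mu}\cdot\frac{(1+\mu)D}{2}\;=\;\frac{1+\mu}{2\ln\mu},
\]
\emph{independently of $D$}. Summing $\E[\text{length}]=\sum_v \Phi(D_v)\,w_v$ then gives $\E[\text{length}]\le\tfrac{1+\mu}{2\ln\mu}\sum_\ell w_\ell$ for every base $\mu>1$, so we are free to optimize $\mu$. Minimizing $\tfrac{1+\mu}{2\ln\mu}$ over $\mu>1$, the first-order condition is exactly $\mu\ln\mu=\mu+1$, i.e.\ $\mu=\mu^*$, and the optimal value is $\tfrac{1+\mu^*}{2\ln\mu^*}=\tfrac{\mu^*}{2}$, which is precisely the claimed factor.

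The main obstacle is to turn the $\lesssim$ above into an exact uniform bound, i.e.\ to control the boundary regimes. Rounding the ideal rung $\mu^{k+u}$ to an integer distance, the small-$D$ rungs, and the top of the ladder (where the predecessor is capped at node $1$ with $w_1=0$) all produce corrections; I would check that in each regime $\Phi(D)$ is in fact \emph{no larger} than $\tfrac{1+\mu}{2\ln\mu}$. For small $D$ the exact inclusion probability $\log_\mu\tfrac{D+1}{D}$ is strictly below its surrogate $\tfrac{1}{D\ln\mu}$, and for large $D$ the capped predecessor satisfies $D_{\mathrm{prev}}\le\mu D$ once $D\ge n/\mu$, so the coefficient only shrinks; hence the uniform bound $\Phi(D)\le\tfrac{\mu^*}{2}$ survives. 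This case analysis is the technically delicate (though routine) part of the proof.

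Finally, for the structural ``moreover'' claim I would argue by a local exchange exploiting the convexity (Monge) structure of the costs; this is logically independent of the bound above, since the ladder path need not land on corner points. If a shortest path visits a node $\ell$ with $(\ell,w_\ell)$ not a vertex of $\conv\{(j,w_j):j\}$, then $w_\ell$ lies on or above the chord joining the two hull corners $a<\ell<b$ that straddle $\ell$ in index. Using the decomposition $\tfrac{D_i+D_\ell}{2}w_\ell+\tfrac{D_\ell+D_j}{2}w_j$ of the two arcs incident to $\ell$ and the fact that the replacement corners carry weights no larger than the interpolated value $w_\ell$, one shows that rerouting through the appropriate corner (or bypassing $\ell$ entirely) does not increase the total length. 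Iterating this exchange eliminates every non-corner node, so the shortest path is supported on extreme points of $(w_1,\ldots,w_n)$, as claimed.
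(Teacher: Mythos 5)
First, a point of context: the paper contains no proof of Theorem~\ref{thm:concat_graph} --- it is imported wholesale from \cite{GK96,AB10} --- so your attempt can only be measured against those original arguments. Your strategy for the length bound is in fact the same randomized geometric-doubling technique used there (the literature runs the ladder on the tour-length values; you run it on the distance-from-the-end coordinate $D_v=n-v$, a mirror image of the same idea), and your optimization of the base is exactly right: $\frac{1+\mu}{2\ln\mu}$ is minimized when $\mu\ln\mu=\mu+1$, where it equals $\frac{\mu^*}{2}$. The genuine problem is that the step you dismiss as ``technically delicate (though routine)'' is where the proof actually lives, and the way you propose to discharge it would fail. Conditioned on $v$ being visited at rung $k$, the predecessor sits at $\lfloor\mu^{k+1+u}\rfloor$, which can be as large as roughly $\mu(D+1)$, not $\mu D$, while the exact inclusion probability is $\frac{\ln(1+1/D)}{\ln\mu}$. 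Bounding these two quantities \emph{separately}, as your third paragraph suggests, gives $\Phi(D)\le\frac{\ln(1+1/D)}{\ln\mu}\cdot\frac{D+\mu(D+1)}{2}$, whose ratio to the target $\frac{1+\mu}{2\ln\mu}$ is $\ln(1+1/D)\bigl(D+\frac{\mu}{1+\mu}\bigr)$; at $\mu=\mu^*$ this exceeds $1$ for \emph{every} $D\ge 1$ (about $1.24$ at $D=1$, tending to $1$ from above as $D\to\infty$). So ``probability below its surrogate'' and ``coefficient only shrinks'' cannot be decoupled: the slack in $\ln(1+1/D)<1/D$ is needed precisely to absorb the floor error, and this requires integrating probability and coefficient jointly over the phase. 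Substituting $y=\mu^{k+u}$,
\[
\Phi(D)\;=\;\int_D^{D+1}\frac{\lfloor\mu y\rfloor+D}{2}\cdot\frac{dy}{y\ln\mu}
\;\le\;\frac{1}{2\ln\mu}\Bigl(\mu+D\ln\tfrac{D+1}{D}\Bigr)\;\le\;\frac{\mu+1}{2\ln\mu},
\]
which is the bound you want; with this (and your endpoint observations, which are fine) the first assertion goes through.

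The ``moreover'' part is also missing its engine. With path-neighbors $i<\ell<j$ fixed, the relevant cost is $g(m,w)=(n-\frac{i+m}{2})w+(n-\frac{m+j}{2})w_j$, which is \emph{not} monotone under replacing $(\ell,w_\ell)$ by a hull corner of smaller weight, because the index moves too; the bilinear term $-\frac{mw}{2}$ is what saves you: $g$ is concave along any chord of nonnegative slope, and the chord between the corners $a<\ell<b$ straddling $\ell$ has nonnegative slope (the lower envelope is convex, nonnegative, and $0$ at its left endpoint, hence nondecreasing), so $g(\ell,w_\ell)\ge g(\ell,\mathrm{chord}(\ell))\ge\min\{g(a,w_a),g(b,w_b)\}$. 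You never state this, and ``weights no larger than the interpolated value'' alone does not yield the inequality. Moreover, the minimizing corner may lie outside $(i,j)$, in which case ``rerouting through it'' is not a legal path, and the fallback of bypassing $\ell$ can strictly \emph{increase} the cost (bypassing trades a saving of $(n-\frac{i+\ell}{2})w_\ell$ against an extra $\frac{\ell-i}{2}w_j$, which is a loss when $w_j\gg w_\ell$); so the case analysis --- which node of a run of consecutive non-corner nodes to process, and when bypassing is justified --- is genuinely needed and not supplied. Both gaps are fixable, and your overall plan mirrors the cited proofs, but as written the two deferred steps are the substance of the theorem, not routine verifications.
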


\begin{corollary}[Corollary 2.2 in \cite{PS15}] \label{cor:concat_graph}
The shortest $1\leadsto n$ path in $\cg(w_1,\ldots,w_n)$ has length at most
$\frac{\mu^*}{2} \int_{1}^n f(x)dx$, where $f:[1,\ldots,n]\mapsto\R_+$ is the
lower-envelope curve of $\{(j,w_j): j=1,\ldots,n\}$, and only visits nodes
corresponding to extreme points of $(w_1,\ldots,w_n)$. 
\end{corollary}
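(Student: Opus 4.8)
The plan is to derive the corollary from Theorem~\ref{thm:concat_graph} by a grid-refinement argument, using the lower-envelope curve $f$ as the ``profile'' from which a finely subdivided instance is built. The second assertion---that the shortest path visits only extreme points---is immediate, since it is already part of Theorem~\ref{thm:concat_graph} applied to $(w_1,\ldots,w_n)$; so the work is entirely in replacing the bound $\frac{\mu^*}{2}\sum_\ell w_\ell$ by the (generally smaller) quantity $\frac{\mu^*}{2}\int_1^n f(x)\,dx$. A first instinct is to observe that the shortest path uses only extreme points, where $w_\ell=f(\ell)$, and hence to re-run Theorem~\ref{thm:concat_graph} on the sequence $(f(1),\ldots,f(n))$; but this yields only $\frac{\mu^*}{2}\sum_j f(j)$, which by convexity of $f$ is at least $\frac{\mu^*}{2}\int_1^n f$. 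To close this gap one must genuinely pass from a sum to an integral, and the natural device is to refine the grid.

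Concretely, I would fix an integer $N\ge 1$, subdivide $[1,n]$ into steps of width $1/N$, and define a refined sequence $v_1,\ldots,v_m$ of length $m=(n-1)N+1$ by sampling the lower-envelope curve, $v_k=f\bigl(1+(k-1)/N\bigr)$. Since $f$ is convex and piecewise linear, the extreme points of $(v_1,\ldots,v_m)$ are exactly the refined copies of the original corner points, namely the indices $k=(p-1)N+1$ for extreme $p$; at such indices $v_k=w_p$. The key calculation is a scaling identity for arc costs: for two original extreme positions $p<p'$ with refined indices $k=(p-1)N+1$ and $k'=(p'-1)N+1$, one checks that $m-\tfrac{k+k'}{2}=N\bigl(n-\tfrac{p+p'}{2}\bigr)$, so the cost of arc $(k,k')$ in $\cg(v_1,\ldots,v_m)$ is exactly $N$ times the cost of arc $(p,p')$ in $\cg(w_1,\ldots,w_n)$.

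Now apply Theorem~\ref{thm:concat_graph} to $(v_1,\ldots,v_m)$: the shortest $1\leadsto m$ path has length at most $\frac{\mu^*}{2}\sum_{k=1}^m v_k$ and visits only extreme points of $v$. By the previous paragraph these are refined copies of original extreme positions (and the endpoints $k=1$, $k=m$ correspond to positions $1$ and $n$), so this path projects to a genuine $1\leadsto n$ path in $\cg(w_1,\ldots,w_n)$, whose length there is, by the scaling identity, exactly $1/N$ times its length in $\cg(v)$. Hence the shortest $1\leadsto n$ path in $\cg(w_1,\ldots,w_n)$ has length at most $\frac{1}{N}\cdot\frac{\mu^*}{2}\sum_{k=1}^m v_k=\frac{\mu^*}{2}\cdot\frac{1}{N}\sum_{k=1}^m f\bigl(1+(k-1)/N\bigr)$.

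Finally I would let $N\to\infty$. Since $f$ is continuous, the right-hand side is $\frac{\mu^*}{2}$ times a Riemann sum for $\int_1^n f(x)\,dx$ and therefore converges to $\frac{\mu^*}{2}\int_1^n f(x)\,dx$; as the left-hand side is independent of $N$, the claimed bound follows. I expect the main obstacle to be the two structural claims in the middle step---that the refined shortest path uses only original extreme positions, so that it projects back to a feasible path in $\cg(w)$, together with the exact arc-cost scaling identity---since everything else is either inherited verbatim from Theorem~\ref{thm:concat_graph} or a routine Riemann-sum limit.
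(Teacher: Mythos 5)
The paper never proves this statement---it is imported verbatim from \cite{PS15} (both Theorem~\ref{thm:concat_graph} and the corollary are cited results)---so there is no in-paper proof to compare against; judged on its own merits, your grid-refinement derivation from Theorem~\ref{thm:concat_graph} is correct, and it is the natural way to obtain the corollary from the theorem used as a black box. The load-bearing computations check out: with $m=(n-1)N+1$, $k=(p-1)N+1$, $k'=(p'-1)N+1$ one indeed gets $m-\tfrac{k+k'}{2}=N\bigl(n-\tfrac{p+p'}{2}\bigr)$; the refined sequence legitimately starts with $v_1=f(1)=0$ (needed to invoke the theorem, and worth stating explicitly); and since $f$ is continuous and piecewise linear, $\tfrac{1}{N}\sum_{k=1}^m f\bigl(1+(k-1)/N\bigr)$ converges to $\int_1^n f(x)\,dx$, the stray endpoint term $f(n)/N$ vanishing in the limit.

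One claim in your middle step is stated more strongly than what is true, although the proof survives because only the weaker direction is used. You assert that the extreme points of $(v_1,\ldots,v_m)$ are \emph{exactly} the refined copies of the extreme points of $(w_1,\ldots,w_n)$, with $v_k=w_p$ at those indices. Under the paper's definition (extreme points of the full convex hull of $\{(j,w_j)\}$) this fails: an extreme point lying strictly above the lower envelope---for instance $(2,5)$ in the sequence $(0,5,1)$---has $f(p)<w_p$, and its refined copy is not extreme in $(v_1,\ldots,v_m)$ (in that example $f$ is linear on $[1,3]$, so the only extreme points of the refined sequence are its two endpoints). What your argument actually needs is the one-directional inclusion: every extreme point of $(v_1,\ldots,v_m)$ is either an endpoint or the refined copy of a breakpoint of $f$, and every breakpoint of $f$ sits at an integer abscissa $p$ with $f(p)=w_p$, because the vertices of the lower envelope belong to the original point set. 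That inclusion is true (samples interior to a linear piece of $f$ are convex combinations of the samples at that piece's ends, hence not extreme), and it is all that is required for the shortest path in $\cg(v_1,\ldots,v_m)$ to project to a feasible $1\leadsto n$ path in $\cg(w_1,\ldots,w_n)$ at exactly $1/N$ times the cost. So the fix is purely editorial: replace ``exactly'' by this inclusion, and drop the claim $v_k=w_p$ for upper-hull extreme points.
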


\section{Point-to-Point Requests} \label{sec:p2p-requests}

In this section, we present polynomial-time constant-factor approximation algorithms for MLP and single-depot/multi-depot $k$-MLP with point-to-point requests. The main idea is to first reduce the instance to a point-request instance in a modified metric and, then, solve linear program~\eqref{eq:lp} and round the fractional optimal solution to nearly optimal routes via the arborescence-packing result and concatenation process described in Section~\ref{sec:lp-framework}. For single-depot $k$-MLP and MLP, we have a lossless reduction to a point-request instance in a {\em directed} metric; for multi-depot $k$-MLP, we reduce to the point-request version in an undirected metric incurring a factor-3 loss. Intuitively speaking, our algorithms find a sequence of frontiers of increasing sizes around the depot(s) and route the vehicle(s) to satisfy the requests in the order determined by the frontiers. See Figure~\ref{fig:algsketch} for a visualization.
\begin{figure}[t]
\centering
\includegraphics[scale=0.25]{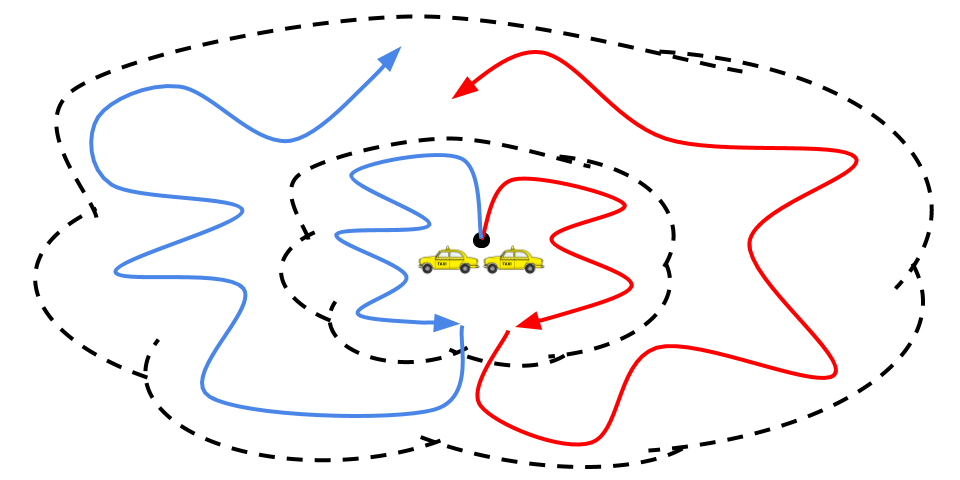}
\caption{A visualization of the structure of routes computed by our algorithms (e.g., Algorithm~\ref{alg:p2p-rounding}). The frontiers are dashed and routes are solid.}
\label{fig:algsketch}
\end{figure}

Our main results are as follows. We prove the single-depot results in Section~\ref{sec:single-depot-kmlp} and the multi-depot ones in Section~\ref{sec:multi-depot-kmlp}.

\begin{theorem} \label{thm:mlp-p2p}
For MLP with point-to-point requests, we can compute a $(\mu^* + \epsilon)$-approximate solution in time $\poly(\inpsize, \frac{1}{\epsilon})$ for any $\epsilon > 0$. 
\end{theorem}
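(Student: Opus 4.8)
The plan is to first turn the point-to-point instance into an equivalent single-vehicle point-request instance in a \emph{directed} metric, on which the framework of Section~\ref{sec:lp-framework} applies verbatim. Introduce one node $v_j$ per request $R_j$ (thought of as ``request $R_j$ just completed at its destination $d_j$'') together with the depot $r_0$, and define arc costs $c'_{r_0,v_j}=c_{r_0 s_j}+c_{s_j d_j}$ and $c'_{v_i,v_j}=c_{d_i s_j}+c_{s_j d_j}$; that is, the cost of moving from completed request $i$ to completed request $j$ is the cost of travelling from $d_i$ to $s_j$ and then serving $R_j$. First I would check that $c'$ is a directed metric: the triangle inequality $c'_{v_i,v_k}\le c'_{v_i,v_j}+c'_{v_j,v_k}$ reduces, after cancelling the common $c_{s_k d_k}$ term, to $c_{d_i s_k}\le c_{d_i s_j}+c_{s_j d_j}+c_{d_j s_k}$, which follows from two applications of the triangle inequality for $c$. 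Next I would verify that the reduction is lossless: a route serving the requests in order $\pi$ corresponds to the directed path $r_0\,v_{\pi(1)}\cdots v_{\pi(n)}$, and telescoping shows that the prefix $c'$-length up to $v_{\pi(i)}$ equals exactly $\LAT$ of $R_{\pi(i)}$ in the original instance. Hence optimal and near-optimal solutions correspond exactly in both directions, and it suffices to $(\mu^*+\epsilon)$-approximate single-vehicle point-request \mlp\ in the digraph on $\{r_0\}\cup\{v_j\}$ with costs $c'$.

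\textbf{LP solve and rounding.} On this directed instance with $k=1$ I would write down \eqref{eq:lp}. Constraint \eqref{jasgn3} forces every $v_j$ to be visited, \eqref{rootdist} respects root distances, and the objective $\sum_{v,t}t\,x_{v,t}$ equals the total latency of any integral route, so its optimum is a lower bound on $\OPT$. I would solve it to a $(1+\epsilon)$-factor in $\poly(\inpsize,1/\epsilon)$ time via the two cases noted in Section~\ref{sec:lp-framework} (scale and round the metric so $\Time$ is polynomial, or work with $\log\Time$ polynomial as in \cite{PS15}), which is where the $\epsilon$ and the running time originate. For the rounding I would, at each time level $t$, treat the single-vehicle flow $\{z_{a,t}\}$ as a fractional structure that by \eqref{jcov3} connects $r_0$ to every node fractionally covered by time $t$ and by \eqref{onep3} has cost at most $t$; the path constraints \eqref{path3} give $|\delta^{\into}(u)|\ge|\delta^{\out}(u)|$, which is exactly the hypothesis of Theorem~\ref{thm:arbpoly}. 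Applying Theorem~\ref{thm:arbpoly} I extract out-arborescences spanning the covered nodes with total cost at most $t$, which are the raw material for the partial routes.

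\textbf{Assembling the route and the main obstacle.} From these arborescences I would build, for each coverage level $\ell$, a partial route from $r_0$ visiting $\ell$ nodes of some length $w_\ell$, set $w_1=0,w_2,\ldots,w_n$, and feed this sequence to the concatenation graph: a shortest $1\leadsto n$ path in $\cg(w_1,\ldots,w_n)$ prescribes which partial routes to concatenate (shortcutting repeats, valid since $c'$ is a metric) into a single route whose total latency is at most the path's length. By Corollary~\ref{cor:concat_graph} this length is at most $\frac{\mu^*}{2}\int_1^n f(x)\,dx$, where $f$ is the lower-envelope curve of $\{(\ell,w_\ell)\}$. The last step is to bound $\int_1^n f$ against the LP optimum: since $f$ is essentially the inverse of the LP's node-coverage curve, its area is a constant factor times $\sum_{v,t}t\,x_{v,t}$, and that constant together with the $\mu^*/2$ of Corollary~\ref{cor:concat_graph} yields total latency at most $(\mu^*+\epsilon)\,\OPT$. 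I expect the main obstacle to be precisely this directed rounding: unlike the undirected setting one cannot double a spanning tree into a traversable tour, so extracting genuine root-originating routes of length $w_\ell$ from the packed out-arborescences --- and doing so tightly enough that the single-vehicle ($k=1$) structure recovers the undirected constant $\mu^*$ rather than a larger multiple --- is where the real work lies, and where Theorem~\ref{thm:arbpoly} and constraints \eqref{path3} are essential.
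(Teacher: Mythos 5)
Your setup matches the paper's: the same directed point-request reduction ($c'_{v_iv_j}=c_{d_is_j}+c_{s_jd_j}$), the same LP, the arborescence-packing theorem, and the concatenation graph. But the proposal has a genuine gap, and it is exactly the step you flag at the end as ``where the real work lies'': you never show how to turn a packed out-arborescence $Q$ in the directed instance into an actual root-based route, nor with what constant, and without that constant being exactly $2$ you cannot conclude $(\mu^*+\epsilon)$. Saying the area of $f$ is ``a constant factor times $\sum_{v,t}tx_{v,t}$'' and that this constant ``yields $(\mu^*+\epsilon)$'' is circular --- the whole theorem hinges on that constant.

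The missing idea, which resolves the obstacle you correctly identified, is that your worry (``one cannot double a spanning tree into a traversable tour in a directed metric'') is false \emph{for this particular directed metric}: every arc $(v_i,v_j)$ of $G'$ is by construction a two-hop walk $d_i\to s_j\to d_j$ in the original \emph{undirected} metric $G$, and that walk costs the same ($c'_{v_iv_j}$, up to triangle inequality) traversed in either direction in $G$. Hence a DFS/Euler tour of the doubled arborescence $Q^*_\ell$ can be realized as a closed walk in $G$, starting and ending at $r_0$, of length at most $2c'(Q^*_\ell)$, and it serves every request correctly (source before destination), since each node's unique incoming arc is traversed in its forward orientation at some point --- and this remains true if the entire cycle is traversed in reverse, again at cost at most $2c'(Q^*_\ell)$. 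This ``either direction costs at most $2c'(Q^*_\ell)$'' observation is what the paper exploits (it is where $k=1$ differs from $k$-MLP, which must cut the cycle into $k$ segments, pay $+2t^*_\ell$ to reattach them to the root, and use a $\tfrac34$/$\tfrac14$ randomized orientation with a factor-$3$ penalty for reversed segments, giving $2.5\mu^*$). Concretely, the paper adds the points $\bigl(|V(Q^t_\ell)\cap S(t)|,\,2c'(Q^t_\ell)\bigr)$ to the lower-envelope set; Theorem~\ref{thm:arbpoly} gives $\E[c'(Q^t_\ell)]\le t$ for $k=1$, so $\int_1^n f\le 2\sum_{v,t}tx_{v,t}$, and the two-directions fact gives the second ingredient you also need but do not prove: a request covered in a concatenated tour incurs expected latency at most half the tour length, i.e.\ at most $f(\ell)/2$, matching the concatenation-graph edge costs $\bigl(n-\frac{o+\ell}{2}\bigr)f(\ell)$. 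With those two facts, $\tfrac{\mu^*}{2}\cdot 2\sum_{v,t}tx_{v,t}\le\mu^*(1+\epsilon)\OPT$ follows; without them, your argument as written only promises ``some constant,'' not $\mu^*$.
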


\begin{theorem} \label{thm:kmlp-p2p}
For single-depot $k$-MLP with point-to-point requests, we can compute a $(2.5\mu^* + \epsilon)$-approximate solution in time $\poly(\inpsize, \frac{1}{\epsilon})$ for any $\epsilon > 0$. ($2.5 \mu^* \approx 8.978$.) 
\end{theorem}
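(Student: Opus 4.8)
The plan is to mirror the single-vehicle argument behind Theorem~\ref{thm:mlp-p2p} but carry out the LP rounding for $k$ vehicles. First I would set up a \emph{lossless} reduction from the point-to-point instance to a point-request instance in a directed metric. Build a digraph $D'$ on node set $\{r\}\cup\{v_1,\dots,v_n\}$, where $v_i$ stands for request $R_i$ ``completed'' at its destination $d_i$, and set arc costs $c'_{r,v_i}=c_{r s_i}+c_{s_i d_i}$ and $c'_{v_j,v_i}=c_{d_j s_i}+c_{s_i d_i}$ (writing $d_r:=r$). A direct check using the triangle inequality in the original metric shows $c'$ satisfies the directed triangle inequality, e.g.\ cancelling the common term reduces the required bound to $c_{d_j s_i}\le c_{d_j s_\ell}+c_{s_\ell d_\ell}+c_{d_\ell s_i}$. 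The point is that a vehicle route serving requests in the order $R_{i_1},R_{i_2},\dots$ corresponds exactly to the path $r\,v_{i_1}\,v_{i_2}\cdots$ in $D'$, and the prefix length to $v_{i_m}$ in $D'$ equals the latency $\LAT$ of $R_{i_m}$; hence the two problems have equal optimum and a solution to one transfers to the other with no loss.

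Next I would apply the LP framework of Section~\ref{sec:lp-framework} to the point-request instance on $D'$. Since \eqref{eq:lp} is written for a general directed metric, it is a valid relaxation of $k$-MLP on $D'$, so its optimum is at most $\OPT$ for the original point-to-point instance. As noted in the excerpt, after scaling and rounding the metric (losing a $(1+\epsilon)$ factor) either $\Time$ or $\log\Time$ becomes polynomially bounded, so a $(1+\epsilon)$-approximate fractional solution $(x,z)$ can be computed in time $\poly(\inpsize,\tfrac{1}{\epsilon})$.

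The heart is the rounding. For a geometric sequence of budgets $t$, I would aggregate the route variables over vehicles, $w_a:=\sum_i z^i_{a,t}$ (suitably scaled to integers); constraint \eqref{path3} guarantees the in-degree $\ge$ out-degree hypothesis of Theorem~\ref{thm:arbpoly}, so with $K=k$ that theorem extracts integral out-arborescences with multiplicities summing to $k$, of total cost $\sum_a c'_a w_a\le kt$ (by \eqref{onep3} summed over $i$), covering $\min\{k,\lambda_{D'}(r,v)\}$ copies of each node. Each out-arborescence $F$ is then turned into a walk serving all its requests: traversing $F$ in DFS order, a forward step along arc $(v_a,v_b)$ costs exactly $c'_{v_a,v_b}$, while a backtrack step from $d_b$ to $d_a$ costs $c_{d_b d_a}\le c_{s_b d_b}+c_{d_a s_b}=c'_{v_a,v_b}$, so the serving walk has cost at most $2c'(F)$. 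Feeding the resulting per-budget $k$-vehicle tour costs into the concatenation machinery and invoking Corollary~\ref{cor:concat_graph} (whose shortest path visits only extreme/corner points of the cost curve) assembles $k$ final routes whose total latency is at most $\tfrac{\mu^*}{2}$ times the area under the lower-envelope curve of these costs, which the LP value bounds.

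The main obstacle is pinning down the constant in this final assembly for a \emph{directed} metric. In the undirected point-request case the analogous rounding of \cite{PS15} achieves $2\mu^*$; the directed case is harder because each extracted tour can be traversed in only one direction, removing the symmetry that the undirected latency analysis exploits to save a constant factor. I expect that carefully relating, at every budget, the arborescence costs and their node coverage to the integral $\int f$ under the lower-envelope curve, while respecting one-way traversal, is precisely what degrades the guarantee from $2\mu^*$ to $2.5\mu^*$. Specializing to $k=1$ should collapse this extra slack and recover the $\mu^*$ bound of Theorem~\ref{thm:mlp-p2p}, a useful check that the penalty is genuinely a joint multi-vehicle and directed-metric phenomenon.
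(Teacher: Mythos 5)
Your setup — the lossless directed-metric reduction, the LP \eqref{eq:lp} on the directed instance, the arborescence packing via Theorem~\ref{thm:arbpoly}, and the DFS doubling bound $2c'(F)$ — matches the paper's Algorithm~\ref{alg:p2p-rounding} step for step. But the proposal stops exactly where the actual work is: you never derive the constant $2.5\mu^*$, you only ``expect'' that a careful accounting under one-way traversal yields it. That accounting is the theorem, and your diagnosis of the obstacle points in the wrong direction. You assert that in the directed setting ``each extracted tour can be traversed in only one direction, removing the symmetry'' of the undirected case. The paper's resolution is precisely the opposite observation: a tour $r_0 s_1 d_1 \cdots s_q d_q r_0$ \emph{can} be traversed in reverse (serving the same requests in reverse order), at a cost factor of at most $3$, since the reverse traversal is a shortcut of $d_q s_q d_q \cdots d_1 s_1 d_1 s_1$, which re-covers each $s_i$--$d_i$ leg at most three times. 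Step 7 of Algorithm~\ref{alg:p2p-rounding} then traverses each tour forward with probability $\tfrac34$ and backward with probability $\tfrac14$. This biased randomization is what restores the structure the concatenation graph needs: a request covered in the current step, sitting at prefix length $L'$ of a tour of interior length $L \le \tfrac{2c'(Q^*_\ell)}{k}$, incurs expected latency $\tfrac34(t_\ell + L') + \tfrac14\bigl(t_\ell + 3(L-L')\bigr) = t_\ell + \tfrac{3L}{4} \le \tfrac{f(\ell)}{2}$ (the $L'$-dependence cancels exactly because $\tfrac34 = 3\cdot\tfrac14$), while each still-uncovered request incurs at most $2t_\ell + \tfrac{3L}{2} \le f(\ell)$. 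The halving relation between covered and uncovered requests is what lets the increments be charged to the edge lengths $\bigl(n - \tfrac{o+\ell}{2}\bigr) f(\ell)$ of $\cg(f(1),\ldots,f(n))$; it also dictates the coefficient in the points $\bigl(|V(Q^t_\ell)\cap S(t)|, \tfrac{3c'(Q^t_\ell)}{k}+2t\bigr)$ added to $C$ (Lemma~\ref{lem:lpsbound} then gives $\int_1^n f \le 5\sum_{u,t} t x'_{u,t}$, hence $\tfrac{\mu^*}{2}\cdot 5 = 2.5\mu^*$). You never commit to any coefficient, so no constant can come out of your argument; and a forward-only traversal, which is what your ``one direction'' premise forces, makes covered requests pay up to $t_\ell + L$ rather than $\approx t_\ell + L/2$, which after rebalancing the curve yields a bound no better than $3\mu^*$.

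A secondary inaccuracy: you claim that setting $k=1$ in your argument ``should collapse the extra slack and recover the $\mu^*$ bound.'' In the paper the MLP improvement is not automatic; it comes from replacing the points in $C$ by $\bigl(|V(Q^t_\ell)\cap S(t)|, 2c'(Q^t_\ell)\bigr)$, using the separate observations that with one vehicle the cycle need not be broken into segments (so no $+2t$ connection cost) and that a traversal of $Z_\ell$ in \emph{either} direction costs at most $2c'(Q^*_\ell)$. So both the multi-vehicle constant and the single-vehicle constant hinge on direction-reversal facts that your proposal rules out by assumption.
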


\begin{theorem} \label{thm:mult-kmlp-p2p}
For multi-depot $k$-MLP with point-to-point requests, we can compute a $(25.488 + \epsilon)$-approximate solution in time $\poly(\inpsize, \frac{1}{\epsilon})$ for any $\epsilon > 0$.
\end{theorem}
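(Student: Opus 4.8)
The plan is to reduce multi-depot \kmlp{} with point-to-point requests to multi-depot \kmlp{} with \emph{point} requests in an \emph{undirected} metric, losing only a constant factor of $3$, and then to invoke the $8.497$-approximation for the latter problem due to \citet{PS15} (the corresponding entry of Table~\ref{tbl:results}). This yields an approximation ratio of $3 \cdot 8.497 \approx 25.488$; the additional $\e$ comes, as in Theorems~\ref{thm:mlp-p2p} and~\ref{thm:kmlp-p2p}, from solving or $(1+\e)$-approximating \eqref{eq:lp} after scaling the metric so that either $\Time$ or $\log\Time$ is polynomially bounded. Unlike the single-depot case of Theorem~\ref{thm:kmlp-p2p}, where a \emph{lossless} reduction to a point-request instance in a \emph{directed} metric is available, here we are forced to produce an \emph{undirected} instance, since the multi-depot algorithm of \citet{PS15} that we use as a black box operates on undirected metrics; the factor $3$ is precisely the price of this symmetrization.

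First I would set up the reduction by representing each request $R_i=(s_i,d_i)$ by a single ``request node'' $v_i$, so that visiting $v_i$ in the point-request instance corresponds to serving $R_i$ (going first to $s_i$ and then to $d_i$, with the latency measured at $d_i$). The faithful encoding assigns to the transition from the node representing ``just served $R_j$'' (located at $d_j$) to $v_i$ the cost $c_{d_j s_i}+c_{s_i d_i}$, and to the transition from a depot $r_\ell$ the cost $c_{r_\ell s_i}+c_{s_i d_i}$; along any route this makes the distance to $v_i$ equal to the true point-to-point latency of $R_i$, so the encoding is lossless but \emph{asymmetric}. To obtain an undirected instance usable by \citet{PS15}, I would symmetrize this cost, placing each $v_i$ at its destination $d_i$ and folding the service length $c_{s_i d_i}$ into a symmetric modified metric $\hat c$ built from the destination-to-destination distances $c_{d_i d_j}$; the first step is to verify, via the triangle inequality of $c$, that $\hat c$ is a genuine metric on the depots together with $\{v_i\}$.

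Next I would prove the two inequalities that together give the factor $3$. For one direction, given a $\hat c$-feasible point-request solution I would read off its visiting orders and \emph{realize} them as point-to-point routes; because $\hat c$ is designed to dominate the true directed serving cost $c_{d_j s_i}+c_{s_i d_i}$ of each hop, every request's point-to-point latency is at most its $\hat c$-latency, so any solution converts back with no increase in total latency. For the reverse direction I would start from an optimal point-to-point routing, keep its visiting order, and bound the induced $\hat c$-latency by $3\cdot\OPT$: here I would use $c_{d_i d_j}\le c_{d_i s_j}+c_{s_j d_j}$ to control the inter-destination hops and, crucially, charge each service length $c_{s_i d_i}$ to the \emph{latency of $R_i$ itself} through the bound $c_{s_i d_i}\le c_{r_\ell s_i}+c_{s_i d_i}\le \LAT(P,i)$, so that service lengths are paid once per request rather than accumulated along the route. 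Applying the \citet{PS15} algorithm to $\hat c$ and converting its output back then produces the claimed $(25.488+\e)$-approximate routes.

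The main obstacle is exactly this symmetrization step and its two-sided factor-$3$ guarantee. The faithful encoding of ``serve $s_i$ then $d_i$, with latency recorded at $d_i$'' is inherently directed, since the service length $c_{s_i d_i}$ is an \emph{arrival} cost incurred only when a route reaches $v_i$, never when it departs; an undirected surrogate cannot tell arrivals from departures, so the naive symmetric distance either double-counts service lengths on \emph{every} incident hop (which would blow up $\OPT$ for $\hat c$ by a factor growing with the number of requests) or drops them (which breaks the back-conversion). The delicate point, and the one I expect to require the most care, is to design $\hat c$ and the accompanying charging argument so that each service length is charged to its own request's latency exactly once, turning what looks like an accumulating per-hop loss into an overall loss of only $3$.
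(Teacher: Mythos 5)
Your high-level architecture is exactly the paper's: collapse each request $R_i=(s_i,d_i)$ to a single node, symmetrize into an undirected multi-depot point-request instance at a factor-$3$ loss, and invoke the $8.497$-approximation of \citet{PS15}, giving $3\cdot 8.497\approx 25.488$ plus $\epsilon$. The gap is precisely in the step you flag as delicate: your concrete symmetrization does not work. Write $b_i:=c_{s_id_i}$ and consider any metric of the form you describe, $\hat c_{v_jv_i}=c_{d_jd_i}+\beta_j b_j+\beta_i b_i$, i.e.\ request nodes placed at \emph{destinations} with service lengths folded in. Your back-conversion needs per-hop domination $\hat c_{v_jv_i}\ge c_{d_js_i}+b_i$; but taking $b_j=0$, $d_j=d_i$, and $s_i$ far from $d_i$ (so $c_{d_js_i}=b_i$ while $c_{d_jd_i}=0$) forces $\beta_i\ge 2$. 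And once $\beta\ge 2$, the forward direction degrades: the hop from $v_{j-1}$ to $v_j$ exceeds the true hop cost $c_{d_{j-1}s_j}+b_j$ by up to $2b_{j-1}+2b_j$, and since an extra folded into hop $j$ is re-paid in the $\hat c$-latency of \emph{every} request served after $j$ (this is where your phrase ``paid once per request rather than accumulated'' goes wrong --- extras in a latency objective always accumulate), the best you get is $\widehat{\LAT}(i)\le \LAT(P,i)+4\sum_{j\le i}b_j\le 5\LAT(P,i)$, hence $\OPT'\le 5\cdot\OPT$ and an overall ratio near $42.5$, not $25.488$. So within the destination-based family, domination and the factor-$3$ inflation bound are mutually exclusive; there is no choice of $\hat c$ that satisfies both of your stated inequalities.

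The paper resolves the tension by placing each request node at its \emph{source}: $c'_{v_iv_j}=c_{s_is_j}+b_i+b_j$ (depots being dummy requests with $s=d$). This hop cost is exactly realized by the ``backtracking'' segment $d_{j-1}\to s_{j-1}\to s_j\to d_j$, so domination holds by the triangle inequality $c_{d_{j-1}s_j}\le c_{d_{j-1}s_{j-1}}+c_{s_{j-1}s_j}$ --- crucially, the backtrack at a hop is paid by the \emph{predecessor's} service length $b_{j-1}$, never by a doubled copy of the arriving request's own $b_j$ --- while the reverse direction satisfies $\LAT(P_b,i)\le 2\LAT(P,i-1)+\LAT(P,i)$ by induction, which sums to the factor $3$ (Lemma~\ref{lem:backtrack}, and this factor is tight). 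In other words, the asymmetry of ``arrival costs'' that you correctly identify as the obstacle is absorbed by anchoring nodes at sources, not destinations; with your anchoring the obstacle is genuine and fatal, so the proposed proof does not go through as written.
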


\subsection{Single Depot} \label{sec:single-depot-kmlp}

The algorithm leading to Theorem~\ref{thm:kmlp-p2p} is given in Algorithm~\ref{alg:p2p-rounding}; the improved ratio for MLP (Theorem~\ref{thm:mlp-p2p}) is due to a simple observation and the underlying algorithm and analysis are essentially identical. As noted earlier, we transform the given undirected problem instance with point-to-point requests to a point-request instance in a {\em directed} metric (see Step 1 of Algorithm 1) 
and apply the linear programming approach in Section~\ref{sec:lp-framework}. 
The transformation to a directed metric is lossless since there is a one-to-one correspondence between feasible solutions in $G$ and $G'$ and corresponding solutions have the same total latency. 
Algorithm~\ref{alg:p2p-rounding} works with the directed instance $G'$ in Steps 1--7 and with the given undirected instance $G$ in Steps 7 and 8. 

\begin{figure*}
\begin{minipage}{\textwidth}
{\small \hrule 
\begin{algorithm} \label{alg:p2p-rounding} 
The input is a single-depot $k$-MLP instance $G = (V, E)$ with point-to-point requests and root $r_0$. 
\end{algorithm}
\begin{labellist}[]
\item Define a directed graph $G' = (V', A)$ on $n+1$ nodes corresponding to the root/requests and arcs $(v_i,v_j)$ with length $c'_{v_i v_j} = c_{d_i s_j} + c_{s_j d_j}$ for all requests $i$ and $j$. Treat root $r_0$ as the $0$-th request $(s_0, d_0)$ with $r_0 = s_0 = d_0$. Let $r$ denote the root in $G'$.  
Let $\Time = 4n \lb$ where $\lb := \max_{v \in V} c_{r_0 v}$. By scaling and rounding the $c$-metric (e.g., \cite{AK03}), we may assume that $\Time=\poly\bigl(\inpsize,\frac{1}{\epsilon}\bigr)$ losing a $(1+\epsilon)$-factor.

\item 
Compute an optimal solution $(x, z)$ to \eqref{eq:lp} for the instance given by $G'$ and $\{c'_a\}$. Let $x'_{v,t}=\sum_{i}x^i_{v,t},\ z'_{a,t}=\sum_{i}z^i_{a,t'}$ for all $v, a, t$.

\item Initialize $C\assign\{(1,0)\}$, $\Qc\assign\es$. Let $K$ be such $Kz'_{a,t}$ is an
integer for all $a, t$. 
For $t\in[\Time]$, define $S(t)=\{u\in V: \sum_{t'=0}^t x'_{u,t'}>0\}$. 
(Note that $r\in S(t)$ for all $t>0$.)

\item For all $t=1,\ldots,\Time$, do the following. 
Apply Theorem~\ref{thm:arbpoly} on the digraph $D$ with edge weights $\{Kz'_{a,t}\}_{a\in A}$ and
integer $K$ (and root $r$) to obtain a weighted arborescence family
$(\gm_1,Q^t_1),\ldots,(\gm_q,Q^t_q)$. 
For each arboresence $Q^t_\ell$ in the family, 
add the point $\bigl(|V(Q^t_\ell)\cap S(t)|,\frac{3c'(Q^t_\ell)}{k}+2t\bigr)$ to $C$, and add $Q^t_\ell$ to $\Qc$. 

\item For all $\ell=1,\ldots,n$, compute $f(\ell)$ where 
$f:[1,n]\mapsto\R_+$ is the lower-envelope curve of $C$.
We show in Lemma~\ref{lem:lpsbound} that for every corner point $\bigl(\ell,f(\ell)\bigr)$ of
$f$, 
there is some tree $Q^*_\ell\in\Qc$ and some time $t^*_\ell$
such that $\ell=|V(Q^*_\ell)\cap S(t^*_\ell)|$, 
$f(\ell)=\frac{3c'(Q^*_\ell)}{k}+2t^*_\ell$, and 
$\max_{v\in Q^*_\ell\cap S(t^*_\ell)}c'_{r v}\leq t^*_\ell$. 

\item Find a shortest $1\leadsto n$ path $P_{C}$ in the concatenation graph $\cg(f(1),\ldots,f(n))$.

\item For every node $\ell>1$ on $P_{C}$, do the following. 
Do a DFS-traversal from $r$ on $Q^*_\ell$ to obtain a cycle $Z_\ell$, in the undirected sense, that uses each arc of $Q^*_\ell$ twice.
Break $Z_\ell$ into $k$ segments, each of $c'$-length at most $2c'(Q^*_\ell)/k$. 
For each segment, create the corresponding tour $Z_{i,\ell}$ in $G$, starting from and ending at $r_0$, that satisfies only the requests that have their (first) appearance within the segment in the Eulerian tour and in the order of first appearances. Skip requests that are not in $S(t^*_\ell)$. With probability $\frac{3}{4}$, we traverse $Z_{i, \ell}$ in the correct direction and with probability $\frac{1}{4}$, in the opposite direction (satisfying the same set of requests but in the reverse order).
This yields a collection of $k$ tours $Z_{1,\ell}, \ldots, Z_{k,\ell}$.

\item For every $i=1,\ldots,k$, concatenate the tours $Z_{i,\ell}$ for nodes $\ell$ on $P_C$ to obtain vehicle $i$'s route, shortcutting if a request has already been satisfied. 
\end{labellist}
\hrule
}
\end{minipage}
\end{figure*}

The analysis closely resembles the one in \cite{PS15} but is more involved due to ``directionality'' of the requests since a vehicle serves a request by going from the source to the destination, not the other way around.
Lemma~\ref{lem:lpsbound} relates the lower-envelope curve $f$ of $C$ in Step 5 and the objective value of the fractional optimal solution in Step 2 as in \cite{PS15}, since this only requires nonnegative edge costs.

\begin{lemma} \label{lem:lpsbound}
(i) $\int_1^nf(x)dx\leq 5\sum_{u\in V,t\in[\Time]}tx'_{u,t}$. 
(ii) If $\bigl(\ell,f(\ell)\bigr)$ is a corner point of $f$, then there is a tree $Q^*_\ell$
and time $t^*_\ell$ satisfying the properties stated in Step 5 in Algorithm~\ref{alg:p2p-rounding}.
\end{lemma}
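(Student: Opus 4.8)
The plan is to treat the two parts separately, with part (i) carrying essentially all of the work. For part (i), I would fix a time $t$ and form the convex combination, with weights $\gm_\ell/K$, of the points that Step~4 contributes to $C$ from the arborescence family $(\gm_1,Q^t_1),\ldots,(\gm_q,Q^t_q)$; these weights are nonnegative and sum to $1$ by Theorem~\ref{thm:arbpoly}, so the resulting point lies in $\conv(C)$. I then bound its two coordinates separately. For the $y$-coordinate I would use $\sum_\ell \gm_\ell c'(Q^t_\ell)=\sum_a c'_a\sum_{\ell:\,a\in Q^t_\ell}\gm_\ell\le\sum_a c'_a\cdot Kz'_{a,t}$ (the weight bound of Theorem~\ref{thm:arbpoly}) together with constraint \eqref{onep3} summed over the $k$ vehicles, which gives $\sum_a c'_a z'_{a,t}\le kt$; hence the $y$-coordinate is at most $\frac{3}{kK}\cdot Kkt+2t=5t$. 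This is precisely where the factor $5$ enters: it is the sum $3+2$ of the two coefficients appearing in the $y$-coordinate $\frac{3c'(Q)}{k}+2t$.

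For the $x$-coordinate, write $X(t):=\frac1K\sum_\ell\gm_\ell|V(Q^t_\ell)\cap S(t)|$. Swapping the order of summation and using the coverage identity of Theorem~\ref{thm:arbpoly}, $X(t)=\frac1K\sum_{v\in S(t)}\min\{K,\ld_D(r,v)\}$, where $\ld_D(r,v)$ is the max-flow value from $r$ to $v$ under the capacities $\{Kz'_{a,t}\}$ (the in-degree $\ge$ out-degree hypothesis of Theorem~\ref{thm:arbpoly} being supplied by \eqref{path3}). Interpreting $\ld_D(r,v)$ as a minimum $r$--$v$ cut and invoking \eqref{jcov3} summed over $i$, every such cut has capacity at least $K\sum_{t'\le t}x'_{v,t'}$, so $X(t)\ge\sum_{v\in S(t)}\min\{1,\sum_{t'\le t}x'_{v,t'}\}=:Y(t)$, the capped fractional number of nodes covered by time $t$. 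Since $(1,0)\in C$ forces $f(1)=0$ and $f$ is convex, $f$ is nondecreasing on $[1,n]$; combined with the fact that the point of $x$-coordinate $X(t)$ and $y$-coordinate at most $5t$ lies in $\conv(C)$, this gives $f(x)\le 5t$ whenever $x\le X(t)$, and in particular $f(x)\le 5\,t(x)$ for $t(x):=\min\{t:Y(t)\ge x\}$.

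I would then integrate. Because $t(x)$ equals $t$ exactly on the coverage increment $(Y(t-1),Y(t)]$, we get $\int_1^n f(x)\,dx\le 5\int_0^n t(x)\,dx=5\sum_t t\,(Y(t)-Y(t-1))$. Writing $Y(t)=\sum_v\min\{1,\sum_{t'\le t}x'_{v,t'}\}$, each per-node increment is at most $x'_{v,t}$ (the cap only removes mass at large $t$, which can only decrease the weighted sum), so the whole expression is at most $5\sum_{v,t}tx'_{v,t}=5\sum_{u\in V,\,t\in[\Time]}tx'_{u,t}$, which is (i). For part (ii), I would use that a corner point $(\ell,f(\ell))$ of the lower-envelope curve is an extreme point of $\conv(C)$ and hence coincides with one of the points of the finite set $C$. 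Every such point other than $(1,0)$ was added in Step~4 as $\bigl(|V(Q^t_{\ell'})\cap S(t)|,\frac{3c'(Q^t_{\ell'})}{k}+2t\bigr)$ for a specific arborescence and time; setting $Q^*_\ell:=Q^t_{\ell'}$ and $t^*_\ell:=t$ gives the first two required identities immediately, and the root-distance property follows from \eqref{rootdist}, since any $v\in S(t^*_\ell)$ has $x'_{v,t'}>0$ for some $t'\le t^*_\ell$ and \eqref{rootdist} forces $c'_{r v}\le t'\le t^*_\ell$.

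The main obstacle is the chain behind the $x$-coordinate bound $X(t)\ge Y(t)$: correctly identifying $\ld_D(r,v)$ as a minimum cut, checking that \eqref{path3} supplies the in/out-degree hypothesis of Theorem~\ref{thm:arbpoly}, and matching the min-cut lower bound coming from \eqref{jcov3} against the \emph{capped} coverage $Y(t)$. Everything else is bookkeeping, but keeping the constants straight so that $3$ and $2$ combine to exactly $5$, and verifying the ``capping only reduces'' step in the final integration, are the points that require care.
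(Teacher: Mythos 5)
Your proposal is correct and follows essentially the same approach as the paper: the paper's proof simply defers to Lemma~6.3 of \cite{PS15} after noting the modified point and the computation $\E\bigl[\tfrac{3c'(Q^t_\ell)}{k}+2t\bigr]\leq 5t$ (via $\E[c'(Q^t_\ell)]\leq kt$ from Theorem~\ref{thm:arbpoly} and \eqref{onep3}), which is exactly your $y$-coordinate bound. The remaining details you work out---the convex combination with weights $\gm_\ell/K$, the min-cut argument via \eqref{jcov3} for the $x$-coordinate, the integration, and the identification of corner points with points of $C$---are precisely the content of the cited proof in \cite{PS15}, so your write-up is a faithful expansion of what the paper leaves implicit.
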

\begin{proof}
This is a modified version of Lemma 6.3 in \cite{PS15} and has essentially the same proof. The only difference is that we add the point $\bigl(|V(Q^t_\ell)\cap S(t)|,\frac{3c'(Q^t_\ell)}{k}+2t\bigr)$ to $C$. By Theorem~\ref{thm:arbpoly}, $\E[c'(Q_\ell^t)] \leq kt$ and, hence, $\E\left[\frac{3c'(Q^t_\ell)}{k}+2t \right] \leq 5t$. The rest follows the same proof and the lemma follows. 
\end{proof}


\begin{lemma} \label{lem:break}
If $(\ell, f(\ell))$ is a corner point of $f$, each of the tours $Z_{1, \ell}, \ldots, Z_{k, \ell}$ created in Step 7 has length at most $\frac{2 c'(Q^*_\ell)}{k} + 2 t^*_\ell$. In particular, the part of each tour without the first and last edges connecting to the root has length at most $\frac{2 c'(Q^*_\ell)}{k}$.
\end{lemma}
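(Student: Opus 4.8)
The plan is to decompose each tour $Z_{i,\ell}$ into three pieces --- the first edge from $r_0$ to the source of the first served request, the \emph{internal} walk that serves the requests, and the last edge from the destination of the last served request back to $r_0$ --- and to bound the internal walk and the two root-edges separately. The ``in particular'' clause is really the crux, so I would establish it first: the internal walk has $c$-length at most $\frac{2c'(Q^*_\ell)}{k}$. The $i$-th segment is a contiguous piece $u_a,\dots,u_b$ of the DFS cycle $Z_\ell$, which traverses each arc of $Q^*_\ell$ twice and hence has total $c'$-length $2c'(Q^*_\ell)$; since $Z_\ell$ is split into $k$ segments, each segment has $c'$-length at most $\frac{2c'(Q^*_\ell)}{k}$. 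Realizing this segment in $G$ gives the walk $W = d_{u_a}, s_{u_{a+1}}, d_{u_{a+1}}, \dots, s_{u_b}, d_{u_b}$, whose $c$-length equals the segment's $c'$-length, because each arc $(v_i,v_j)$ of $G'$ has cost $c'_{v_i v_j} = c_{d_i s_j} + c_{s_j d_j}$.

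The key structural observation I would then verify is that the internal walk is an order-preserving subsequence of $W$, so the triangle inequality applies. Concretely, at the first appearance of a request $j$ in the segment the walk $W$ visits the pair $s_j, d_j$ consecutively (this is exactly the ``$+\,c_{s_j d_j}$'' contribution of the incoming arc), and these pairs occur along $W$ in order of first appearance. The served requests $j_1,\dots,j_p$ are precisely those whose first appearance lies in this segment, taken in order of first appearance, so the internal walk $s_{j_1}, d_{j_1}, \dots, s_{j_p}, d_{j_p}$ visits a subsequence of the node-sequence of $W$ in the same order. By the triangle inequality its length is therefore at most that of $W$, i.e.\ at most $\frac{2c'(Q^*_\ell)}{k}$.

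For the two root-edges I would invoke the Step~5 guarantee that $\max_{v\in V(Q^*_\ell)\cap S(t^*_\ell)} c'_{rv}\le t^*_\ell$, together with $r=v_0$ and $s_0=d_0=r_0$, which give $c'_{r v_j} = c_{r_0 s_j} + c_{s_j d_j}$. Since only requests in $S(t^*_\ell)$ are served, both $j_1$ and $j_p$ satisfy this bound, so $c_{r_0 s_{j_1}} \le c'_{r v_{j_1}} \le t^*_\ell$ (dropping the nonnegative $c_{s_{j_1} d_{j_1}}$) and, by the triangle inequality, $c_{d_{j_p} r_0} \le c_{r_0 s_{j_p}} + c_{s_{j_p} d_{j_p}} = c'_{r v_{j_p}} \le t^*_\ell$. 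Adding the three pieces yields total length at most $\frac{2c'(Q^*_\ell)}{k} + 2t^*_\ell$; the randomized reversal in Step~7 leaves the length of the closed tour unchanged, so the bound is direction-independent.

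The hard part will be the internal-part bound, and specifically the bookkeeping that serving requests ``in the order of first appearances'' produces a genuine order-preserving subsequence of $W$ with each request traversed from $s_j$ to $d_j$ in the correct direction. This is exactly where the ``directionality'' of point-to-point requests enters (and where the analysis departs from \cite{PS15}); once this is pinned down, the two triangle-inequality shortcuts are routine.
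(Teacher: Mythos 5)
Your decomposition and both root-edge bounds mirror the paper's proof, but the step you yourself identify as the crux --- that every served request $j$ has its pair $s_j,d_j$ appearing \emph{consecutively} in the segment's realized walk $W$ --- fails exactly for the first served request of a segment, and this breaks your version of the ``in particular'' bound. The pair $s_j,d_j$ is consecutive in $W$ only when the forward arc of $Q^*_\ell$ entering $j$ (whose realization contributes $c_{d_x s_j}+c_{s_j d_j}$, $x$ the parent) lies \emph{inside} the segment. Step 7 breaks $Z_\ell$ purely by $c'$-length, so a break point typically lands in the middle of a forward arc traversal into some node $y$ (or exactly on $y$'s first occurrence); the cut arc is removed, the later segment serves $y$ (its first appearance is that segment's first node), yet that segment's realized walk starts at $d_y$ --- as your own formula $W=d_{u_a},s_{u_{a+1}},\dots$ shows --- and any later pass through $y$ within the segment is a backtracking step $d_y\to s_y\to d_x$, i.e.\ in the \emph{wrong} order. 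Hence $s_{j_1}d_{j_1}\cdots s_{j_p}d_{j_p}$ is not an order-preserving subsequence of $W$, your internal-part bound picks up an extra $c_{s_{j_1}d_{j_1}}$, and since you spent only $c_{r_0 s_{j_1}}\le t^*_\ell$ on the first root edge (explicitly discarding $c_{s_{j_1}d_{j_1}}$), your total becomes $\frac{2c'(Q^*_\ell)}{k}+2t^*_\ell+c_{s_{j_1}d_{j_1}}$, weaker than claimed.

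The paper avoids this with an asymmetric decomposition: the first piece is the whole prefix $r_0 s_1 d_1$, bounded by $t^*_\ell$ via the \emph{equality} $c_{r_0 s_1}+c_{s_1 d_1}=c'_{r v_1}\le t^*_\ell$ (using $r_0=s_0=d_0$ and $v_1\in S(t^*_\ell)$), and the middle piece is $P=d_1 s_2 d_2\cdots s_q d_q$, starting at the \emph{destination} of the first served request; $P$ is an order-preserving subsequence of the segment's realized walk no matter where the break fell, because only the later requests' pairs and the single node $d_1$ are needed. With that one change, the rest of your argument --- the length-preserving realization of backward traversals via symmetry of $c$, the first-visit-via-forward-arc observation for $j_2,\dots,j_p$, and $c_{d_q r_0}\le c_{r_0 s_q}+c_{s_q d_q}=c'_{r v_q}\le t^*_\ell$ --- coincides with the paper's proof. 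One further slip: your claim that the randomized reversal ``leaves the length of the closed tour unchanged'' is false. The reversed tour must still traverse each request from $s_j$ to $d_j$, so it is not the mirror image of the forward tour; the paper's proof of Theorem~\ref{thm:kmlp-p2p} bounds its internal part only by a factor of $3$ times the forward length, and the lemma is to be read as a bound on the correct-direction traversal.
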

\begin{proof}
When breaking the cycle $Z_\ell$, a break point can be either on a node or on an edge. In the former case, the node appears in two consecutive segments, and in the latter case, the edge is removed. In either cases, all the nodes of $Q^*_\ell$ will be contained in at least one segment and the corresponding request will be covered in one of the $k$ tours $Z_{1,\ell}, \ldots, Z_{k,\ell}$ by the construction below.

Consider a segment and the corresponding tour $Z_{i,\ell}$. In the cycle $Z_\ell$, we visit a node through a ``correctly'' oriented edge (i.e., in the same direction as the tour) for the first visit and through ``incorrectly'' oriented edges (i.e., in the opposite direction) for subsequent visits, and consequently, the segment consists of both correctly and incorrectly oriented edges. Without loss of generality, let $R_1, \ldots, R_q$ be the requests corresponding to the vertices that are visited for the first time within the given segment in $Z_\ell$. Then, $Z_{i,\ell} = r_0 s_1 d_1 \cdots s_q d_q r_0$. For $P = d_1 s_2 d_2 \cdots s_q d_q$, we show $c(P) \leq \frac{2c'(Q^*_\ell)}{k}$. By the definition of $c'$, each arc in $G'$ corresponds to a path of length $2$ in the original graph $G$. We replace the edges of the segment with corresponding paths of length $2$ of the form $d_i s_j d_j$. The resulting path in $G$ has length at most $\frac{2 c'(Q^*_\ell)}{k}$ and contains the nodes of $P$ as a subsequence, not necessarily contiguously. We shortcut to get $P$ exactly where shortcutting involves going directly from $d_i$ to $s_{i+1}$ for some $i$ and truncating the beginning or end. Since $G$ has metric edge costs $\{c_e\}$, it follows that $c(P) \leq \frac{2 c'(Q^*_\ell)}{k}$. 

For the two paths $r_0 s_1 d_1$ and $d_q r_0$ that complete $P$ into $Z_{i,\ell}$, we upper bound the length of each path by $t^*_\ell$. Let $r$ be the root in $G'$ and $v_1, v_q \in V'$ be the vertices corresponding to $d_1$ and $d_q$, respectively. Since $v_1$ and $v_q$ are in $Q^*_\ell \cap S(t^*_\ell)$, $c'_{r v_1} \leq t^*_\ell$ and $c'_{r v_q} \leq t^*_\ell$. Then, $c_{r_0 s_1} + c_{s_1 d_1} = c'_{r v_1} \leq t^*_\ell$ and $c_{r_0 d_q} \leq c_{r_0 s_q} + c_{s_q d_q} = c'_{r v_q} \leq t^*_\ell$. It follows that $c(Z_{i,\ell}) = c(P) + c_{r_0 s_1} + c_{s_1 d_1} + c_{r_0 d_q} \leq \frac{2 c'(Q^*_\ell)}{k} + 2 t^*_\ell$. 
\end{proof}


\begin{proof}[Proof of Theorem~\ref{thm:kmlp-p2p}]
We claim that the solution returned by Algorithm~\ref{alg:p2p-rounding} has total latency at most the length of $P_C$ in the concatenation graph $\cg(f(1), \ldots, f(n))$. By Corollary~\ref{cor:concat_graph} and Lemma~\ref{lem:lpsbound} part (i), the length of $P_C$ is at most $\frac{\mu^*}{2}\int_{1}^n f(x)dx \leq \frac{5 \mu^*}{2} \sum_{v, t} t x'_{v,t} = 2.5 \mu^* \sum_{v, i, t} t x^i_{v,t}$. Also, $\sum_{v,i,t}t x^i_{v,t}$ is at most $(1+\epsilon)$ times the optimal latency, where the $(1+\epsilon)$-factor is due to scaling and rounding.

We now prove the claim. More specifically, we show that the total latency is at most the length of $P_C$ in $\cg(f(1), \ldots, f(n))$ by induction. By Theorem~\ref{thm:concat_graph} and Lemma~\ref{lem:lpsbound}, there exist $Q^*_\ell$ and $t^*_\ell$ satisfying the properties stated in Step 5 for each corner point $(\ell, f(\ell))$ on $P_C$. By Lemma~\ref{lem:break}, the tours that would be created from $Q^*_\ell$ have length at most $\frac{2 c'(Q^*_\ell)}{k} + 2 t^*_\ell$ in the correct direction and cover all the requests corresponding to the vertices of $Q^*_\ell$. In particular, the part of each tour without the first and last edges connecting to the root has length at most $\frac{2 c'(Q^*_\ell)}{k}$ in the correct direction. 

Suppose inductively that we have covered at least $o$ requests by the partial solution constructed by concatenating tours corresponding to the nodes on $P_C$ up to and including $o$. Assume we next take an edge $(o, \ell)$ and consider the additional contribution to the total latency when we concatenate $Z_{i,\ell}$ to vehicle $i$'s route for $i \in [k]$. Note the resulting partial solution covers at least $\ell$ requests.

Each request covered in the current concatenation step incurs additional latency of at most $\frac{f(\ell)}{2}$ in expectation. To see this, let $L$ be the length of the part of a tour $Z_{i, \ell}$ without the root $r_0$ and $L'$ be the length up to the destination of a request covered by the tour. With probability $\frac{3}{4}$, the additional latency incurred for the request is $t_\ell + L'$. With probability $\frac{1}{4}$, the additional latency is at most $t_\ell + 3(L - L')$ because the length of a traversal in the opposite direction is upper bounded by 3 times the length of the corresponding traversal in the correct direction (accounting the $s_i$-$d_i$ portions 3 times). For example, if $s_1 d_1 \ldots s_n d_n$ is a traversal in the correct direction, the reverse traversal is a shortcut version of $d_n s_n d_n s_n \ldots d_1 s_1 d_1 s_1$ which has length at most 3 times that of $s_1 d_1 \ldots s_n d_n$. In expectation, the additional latency incurred for the request is at most $\frac{3}{4}(t_\ell + L') + \frac{1}{4} (t_\ell + 3L - 3L') = t_\ell + \frac{3L}{4} \leq \frac{f(\ell)}{2}$ as $L \leq \frac{2 c'(Q^*_\ell)}{k}$.

By a similar argument, each request that is still uncovered after concatenation incurs additional latency of at most $f(\ell)$ in expectation. The traversal in the correct direction has length at most $2 t_\ell + L$ and the one in the opposite direction has length at most $2 t_\ell + 3L$. In expectation, the additional latency is at most $2 t_\ell + \frac{3L}{2} \leq f(\ell)$.

There are exactly $\ell$ requests covered by the tours $Z_{1, \ell}, \ldots, Z_{k, \ell}$ and at most $\ell - o$ new requests are covered in the current concatenation step. Each of these incurs additional latency at most $\frac{f(\ell)}{2}$. At most $n - \ell$ requests remain to be covered and they each incur additional latency at most $f(\ell)$. The overall increase in the total latency is at most $\frac{f(\ell)}{2} (\ell - o) + f(\ell) (n - \ell) = f(\ell) \left( n - \frac{o + \ell}{2}\right)$ which is equal to the length of the edge $(o, \ell)$ in $\cg(f(1), \ldots, f(n))$. By induction, the total latency is at most the length of $P_C$. 

For the running time, 
since $\Time=\poly(\inpsize,\frac{1}{\epsilon})$, we can design a separation oracle for \eqref{jcov3}, which is a min-cut algorithm, and solve \eqref{eq:lp} in time $\poly(\inpsize, \frac{1}{\epsilon})$. 

\end{proof}

\begin{proof}[Proof of Theorem~\ref{thm:mlp-p2p}]
We follow the same analysis above for Algorithm~\ref{alg:p2p-rounding}. The improvement of the approximation ratio comes from the fact that we do not need to break the cycle $Z_\ell$ into $k$ tours in Step 7 since we have exactly one vehicle. 
Moreover, we note that the length of a traversal of $Z_\ell$ in {\em either} direction is at most $2c'(Q^*_\ell)$. In Step 4, we now add the point $\bigl(|V(Q^t_\ell)\cap S(t)|, 2c'(Q^t_\ell)\bigr)$ to $C$ to leverage this fact. 
Following the proof of Theorem~\ref{thm:kmlp-p2p}, we now therefore obtain a $\mu^*$-approximation.
\end{proof}

We remark that for single-depot $k$-MLP, we can avoid the additive $\epsilon$ in the approximation factors by utilizing the more combinatorial approach presented in \cite{PS15}. In this approach, we obtain the arborescences directly without solving an LP. Specifically, for each $i=1,\ldots,n$, we aim to find the least $c'$-cost arborescence spanning at least $i$ requests (which in turn utilizes arborescence packings). We then again use the concatenation graph to select a subset that will be converted into tours which are then concatenated. This is closer to the approach we follow in our experimental results.

\subsection{Multiple Depots} \label{sec:multi-depot-kmlp}

For the more general multi-depot $k$-MLP with point-to-point requests, we provide a 
reduction showing that an $\alpha$-approximation algorithm for the point-request version on undirected metrics yields a $3 \alpha$-approximation algorithm for the point-to-point request version on undirected metrics. 
More specifically, the following result immediately yields Theorem~\ref{thm:mult-kmlp-p2p} using the $(8.496+\epsilon)$-approximation for multi-depot $k$-MLP with point-requests in~\cite{PS15}. 
\if\conference1
The proofs in the remainder of this section are deferred to the full version of the paper.
\fi

\begin{lemma} \label{lem:backtrack-reduction}
Let $\OPT$ and $\OPT'$ denote respectively the optimal values of the multi-depot $k$-MLP instance with point-to-point requests, and the multi-depot $k$-MLP instance with point requests obtained by the reduction under the $c'$ edge costs (described below). Then $\OPT\leq\OPT'\leq 3\cdot\OPT$. Hence, an $\alpha$-approximate solution to the point-requests instance yields a $3\alpha$-approximate solution to the point-to-point requests instance.
\end{lemma}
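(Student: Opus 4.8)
The plan is to establish the sandwich $\OPT\le\OPT'\le3\cdot\OPT$ by two cost-controlled transformations between feasible solutions of the two instances, and then read off the algorithmic statement. Throughout I will use that a feasible solution of either instance is a family of $k$ depot-rooted routes, and that total latency decomposes as a sum of per-request completion times, so it suffices to control each request's completion time under the transformations.

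\emph{Lower bound $\OPT\le\OPT'$.} I would show that every feasible solution of the point-request instance on the $c'$-metric induces a feasible point-to-point solution of no larger total latency. Given a point-request route that visits the point for $R_i$ in some order, I replace each such visit by the genuine oriented service $s_i\to d_i$, serving the requests along the route in the order they are visited. Since the $c'$ edge costs are defined to be at least the cost in $G$ of the corresponding service hop, the completion time of every request in the induced point-to-point solution is at most its latency in the point-request solution; summing over requests gives the claim. Applying this to an optimal point-request solution yields $\OPT\le\OPT'$, and I will reuse this \emph{expansion} transformation verbatim on the approximate solution at the end.

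\emph{Upper bound $\OPT'\le3\cdot\OPT$.} I would start from an optimal point-to-point solution and build a point-request solution of total latency at most $3\cdot\OPT$, by mapping each point-to-point route $r_p\,s_{i_1}d_{i_1}s_{i_2}d_{i_2}\cdots$ to the point-request route visiting $v_{i_1},v_{i_2},\dots$ in the same order. The one subtlety is that $c'$ is an undirected (symmetric) metric while service is oriented $s\to d$, so realizing a prefix may traverse already-served segments against their orientation. The decisive estimate is the reverse-traversal bound already isolated in the proof of Theorem~\ref{thm:kmlp-p2p}: reversing a served walk $s_1d_1\cdots s_qd_q$ is a shortcut of $d_qs_qd_q\cdots s_1d_1s_1$, and hence costs at most three times the forward walk, since each $s$-$d$ portion is charged at most three times. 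Applying this to each prefix of a route bounds the $c'$-latency of every request by three times its point-to-point latency; summing gives $\OPT'\le3\cdot\OPT$.

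\emph{Approximation consequence.} Running the given $\alpha$-approximation for point-request multi-depot $k$-MLP on the $c'$-instance (e.g., the $(8.496+\epsilon)$-approximation of \cite{PS15}) produces a solution of total latency at most $\alpha\OPT'\le3\alpha\cdot\OPT$; feeding it through the expansion transformation of the first part yields a feasible point-to-point solution of total latency at most $3\alpha\cdot\OPT$, i.e.\ a $3\alpha$-approximation for the original instance. I expect the upper bound to be the crux, and for two reasons: first, I must verify that the $c'$ costs genuinely form a metric, so that shortcutting is legitimate and the point-request algorithm applies; second, and more delicately, the factor-$3$ reverse-traversal bound is a statement about an \emph{entire} served walk rather than a single hop, so the accounting must be carried out along prefixes of each route and shown to dominate all the per-request completion times simultaneously, rather than edge-by-edge. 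Once $c'$ is pinned down, the expansion direction giving $\OPT\le\OPT'$ is routine.
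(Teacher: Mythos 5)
Your proposal is correct and takes essentially the same route as the paper, which formalizes your two-way solution mapping via \emph{backtracking paths} $P_b = r_0 s_1 d_1 s_1 s_2 d_2 s_2 \cdots s_n d_n$ (whose latencies coincide with latencies under $c'$) and proves $\LAT(P)\leq\LAT(P_b)\leq 3\LAT(P)$ in Lemma~\ref{lem:backtrack} by exactly the prefix accounting you describe. One small correction: the decisive estimate for the upper bound is not the reverse-traversal bound from Theorem~\ref{thm:kmlp-p2p} (which concerns serving a walk's requests in \emph{reverse order}, and does not black-box imply what you need) but this backtracking bound, proved by the same device of charging each $s_i$-$d_i$ segment at most three times via the triangle inequality; with that substitution your argument is the paper's.
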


The main idea of the factor-$3$ reduction is to 
identify the requests by their source nodes and incorporate the request-specific distances $c_{s_i d_i}$ into distances between sources in a symmetric way. We treat the root nodes as dummy requests with the identical source and destination nodes for the reduction. More specifically, $G'$ contains only root nodes and nodes representing the requests, and the distance between $v_i$ and $v_j$ in $G'$ is
$c'_{v_i v_j} = c_{s_i s_j} + c_{s_i d_i} + c_{s_j d_j}$.
It is easy to see that $(G',c')$ is a metric.
We then solve the resulting problem instance using an algorithm for the point requests and convert the computed paths into ones in the original problem instance. Via this reduction, observe that the latency of a rooted path in the $(G',c')$ instance corresponds to the latency of a particular type of path, that we call a {\em backtracking path}, in the original problem instance, where a vehicle satisfies each request $i$ by going from $s_i$ to $d_i$ and then returning to $s_i$ before moving onto the next request. The backtracking paths are for the purpose of analysis only and we shortcut these paths to obtain the final routes for the vehicles. 
%
Clearly, there are bijective mappings between request-orderings and paths and backtracking paths. If the order of the requests is $R_1 \cdots R_n$, which equivalently corresponds to a path $P$, the corresponding backtracking path is 
\[
P_b = r_0 s_1 d_1 s_1 s_2 d_2 s_2 \cdots s_n d_n \,.
\]
If the order of the requests is $R_1 \cdots R_n$ on the route $P$ of a vehicle starting from, say, root $r_0$, the latency of the $i$-th request on the corresponding backtracking path $P_b$ is determined as
\[
\LAT(P_b, i) = \begin{cases} \LAT(P_b, i-1) + c_{d_{i-1} s_{i-1}} + c_{s_{i-1} s_i} + c_{s_i d_i}, & i > 1 \\
0, &i = 0 \end{cases} \,.
\]
The total latency of $P_b$ is $\LAT(P_b) = \sum_{i=1}^n \LAT(P_b, i)$. In particular, the following property is useful in proving Lemma~\ref{lem:backtrack-reduction}.

\begin{lemma} \label{lem:backtrack}
For any path $P$ and corresponding backtracking path $P_b$, we have
$\LAT(P) \leq \LAT(P_b) \leq 3 \LAT(P)$.  Further, the factor of $3$ is tight.
\end{lemma}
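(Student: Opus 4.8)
The plan is to reduce both total latencies to weighted sums with \emph{identical} coefficients and then compare them segment by segment. Summing the given recurrence for $\LAT(P_b,i)$ (and the analogous one for $\LAT(P,i)$), each segment cost contributes to the latency of every request whose destination comes later, so the $j$-th segment is counted $n-j+1$ times. I would therefore write
\[
\LAT(P) = \sum_{j=1}^n (n-j+1)\, a_j, \qquad \LAT(P_b) = \sum_{j=1}^n (n-j+1)\, b_j,
\]
where $a_1 = b_1 = c_{r_0 s_1} + c_{s_1 d_1}$, and for $j \ge 2$ the $j$-th segment costs $a_j = c_{d_{j-1} s_j} + c_{s_j d_j}$ on $P$ and $b_j = c_{d_{j-1} s_{j-1}} + c_{s_{j-1} s_j} + c_{s_j d_j}$ on $P_b$. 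The entire comparison then reduces to relating $a_j$ and $b_j$ under the common nonnegative weights $n-j+1$.

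The lower bound $\LAT(P) \le \LAT(P_b)$ is immediate: the triangle inequality gives $c_{d_{j-1} s_{j-1}} + c_{s_{j-1} s_j} \ge c_{d_{j-1} s_j}$, hence $b_j \ge a_j$ for every $j$, and since the coefficients are nonnegative the inequality survives summation.

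For the upper bound, the naive attempt of proving $b_j \le 3 a_j$ term by term \emph{fails}, and this is the main obstacle: the extra backtracking cost $b_j - a_j$ contains the serving distance $c_{s_{j-1} d_{j-1}}$ of the \emph{previous} request, which can be arbitrarily large relative to $a_j$ (whose only serving term is $c_{s_j d_j}$). The fix is a global charging argument. Using $c_{s_{j-1} s_j} \le c_{s_{j-1} d_{j-1}} + c_{d_{j-1} s_j}$ I would bound $b_j - a_j \le 2\, c_{s_{j-1} d_{j-1}}$, giving
\[
\LAT(P_b) - \LAT(P) \le 2 \sum_{j=2}^n (n-j+1)\, c_{s_{j-1} d_{j-1}} = 2 \sum_{i=1}^{n-1} (n-i)\, c_{s_i d_i}.
\]
On the other hand, $a_i \ge c_{s_i d_i}$ shows that the ordinary latency already dominates these serving distances, namely $\LAT(P) \ge \sum_{i=1}^n (n-i+1)\, c_{s_i d_i} \ge \sum_{i=1}^{n-1} (n-i)\, c_{s_i d_i}$. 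Combining the two displays yields $\LAT(P_b) - \LAT(P) \le 2\,\LAT(P)$, i.e. $\LAT(P_b) \le 3\,\LAT(P)$. The decisive point is that each serving distance gets charged against the term of the \emph{preceding} request in $\LAT(P)$, which carries a strictly larger coefficient; this is exactly why the global bound succeeds where the local one does not.

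Finally, to show the factor $3$ is tight I would exhibit a one-dimensional instance. Place $r_0$ at the origin of the real line and, for $i = 1, \ldots, n$, let $R_i$ have source at coordinate $i-1$ and destination at coordinate $i$, so that $d_{i-1}$ and $s_i$ coincide as points. Then $a_j = 1$ for all $j$ and $b_j = 3$ for $j \ge 2$, giving $\LAT(P) = \binom{n+1}{2}$ and $\LAT(P_b) = n + 3\binom{n}{2}$, so that the ratio equals $(3n-1)/(n+1)$ and tends to $3$ as $n \to \infty$. I expect the bookkeeping of the coefficient representation and the charging inequality to be the only delicate steps; the tightness construction is then a direct calculation.
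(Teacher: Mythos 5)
Your proof is correct, but it is organized differently from the paper's. The paper proceeds by induction on the request index, establishing the pointwise per-request bound $\LAT(P,i) \leq \LAT(P_b,i) \leq 2\LAT(P,i-1) + \LAT(P,i)$ and then summing via $2\LAT(P,i-1)+\LAT(P,i)\leq 3\LAT(P,i)$; you instead unroll both recurrences into closed-form weighted segment sums $\sum_j (n-j+1)a_j$ and $\sum_j (n-j+1)b_j$ and run a global charging argument, bounding $b_j - a_j \leq 2c_{s_{j-1}d_{j-1}}$ and charging this against the serving-distance portion of the \emph{preceding} request's contribution to $\LAT(P)$. The two arguments use the same triangle-inequality ingredients, and your tightness construction (collinear, abutting requests with $d_{i-1}=s_i$) is exactly the paper's; still, your bookkeeping is genuinely different and makes the origin of the factor $3$ more transparent, correctly diagnosing why the naive per-segment bound $b_j \leq 3a_j$ fails. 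What the paper's inductive formulation buys in exchange is robustness: the pointwise inequality $\LAT(P_b,i)\leq 2\LAT(P,i-1)+\LAT(P,i)$ is what the paper re-proves essentially verbatim for the release-time variant (Lemma~\ref{lem:backtrack-release}), where the recurrence $\LAT^+(P,i)=\max\{\LAT^+(P,i-1)+c_{d_{i-1}s_i},T_i\}+c_{s_id_i}$ is no longer a pure prefix sum of segment costs, so your coefficient decomposition would not carry over directly there, while induction over the max-recurrence does.
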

\if\conference0
\begin{proof}
Without loss of generality, let $R_1 R_2 \cdots R_n$ be the ordering of requests on $P$. Let $\LAT(P,i)$ denote the latency of the $R_i$ on $P$. Define $\LAT(P,i)=0$ for all $i\leq 0$.
We show by induction that $\LAT(P, i) \leq \LAT(P_b, i)\leq 2\LAT(P,i-1)+\LAT(P,i)$ for all $i$. Since $2\LAT(P,i-1)+\LAT(P,i)\leq 3\LAT(P,i)$, summing over all $i$ implies the statement. 

The base case when $i=1$ holds since $\LAT(P, 1) = \LAT(P_b, 1) = c_{r_0 s_1} + c_{s_1 d_1}$. Suppose inductively the claim holds for $i-1$. 
Then
\begin{align*}
\LAT(P, i) &= \LAT(P, i-1) + c_{d_{i-1} s_i} + c_{s_i d_i} \\
	&\leq \LAT(P_b, i-1) + c_{d_{i-1} s_{i-1}} + c_{s_{i-1} s_i} + c_{s_i d_i} \\
	&= \LAT(P_b, i) \,,
\end{align*}
where the inequality follows from the induction hypothesis and the triangle inequality. 
%
For the upper bound, we have that
\begin{equation*}
\begin{split}
\LAT(P_b, i) &= \LAT(P_b, i-1) + c_{d_{i-1} s_{i-1}} + c_{s_{i-1} s_i} + c_{s_i d_i} \\
	&\leq 2 \LAT(P, i-2) + \LAT(P, i-1) + (c_{s_{i-1} d_{i-1}} + c_{d_{i-1} s_i}) + \\
	&\quad\quad c_{d_{i-1} s_{i-1}} + c_{s_i d_i} \\
	&= 2\left( \LAT(P, i-2) + c_{s_{i-1} d_{i-1}} \right) + \LAT(P, i-1) + \\
	&\quad\quad c_{d_{i-1} s_i} + c_{s_i d_i} \\
	&\leq 2 \LAT(P, i-1) + \LAT(P, i) \,.
\end{split}   
\end{equation*}
The first inequality follows from the induction hypothesis and the triangle inequality; and the second inequality follows from the recurrence definitions of $\LAT(P, i-1)$ and $\LAT(P, i)$. 
\end{proof}
\fi

\if\conference0
To see that the factor of 3 is tight, consider the following instance with $n$ requests $R_1, \ldots, R_n$ which are to be satisfied in that order. The sources are evenly spaced on a line at the distance interval of 1 and the destinations coincide with the sources such that $r_0 = s_1$, $d_1 = s_2$, $d_2 = s_3$, etc. For the path $P$, $\LAT(P, i) = i$ for all $i$ and the total latency is $\LAT(P) = \frac{1}{2} n^2 + \frac{1}{2} n$. For the backtracking path $P_b$, $\LAT(P_b, i) = 3(i-1) + 1$ for all $i$ and the total latency is $\LAT(P_b) = \sum_{i=1}^n \LAT(P_b, i) = \frac{3}{2} n^2 - \frac{1}{2} n$. Since $\lim_{n\rightarrow \infty} \frac{\LAT(P_b)}{\LAT(P)} = 3$, the ratio can be arbitrarily close to 3. 
\fi

\if\conference0
We can now prove the approximation guarantee via the factor-$3$ reduction.
\begin{proof}[Proof of Lemma~\ref{lem:backtrack-reduction}]
Let $P^1,\ldots,P^k$ be an optimal solution to the original multi-depot $k$-MLP instance with point-to-point requests. 
Let $Q^1_b,\ldots,Q^k_b$ be backtracking paths corresponding to an optimal solution to the resulting multi-depot $k$-MLP instance with point requests (and the $c'$-edge costs). 

By Lemma~\ref{lem:backtrack}, the paths $Q^1_b,\ldots,Q^k_b$ map to paths for the original instance of no greater total latency, so $\OPT\leq\OPT'$. Again, by Lemma~\ref{lem:backtrack}, the paths $P^1,\ldots,P^k$ map to backtracking paths of total latency at most $3\cdot\OPT$, so $\OPT'\leq 3\cdot\OPT$.

The second statement follows immediately from the first one since any solution to the modified instance with point requests yields a solution to the original instance of no greater total latency. 
\end{proof}
\fi

\section{Release-Time Constraints} \label{sec:kmlp-release-results}

We show constant-factor approximation algorithms for the minimum latency problems with release-time constraints for both point requests and point-to-point requests. We incorporate the release-time constraints into the linear program \eqref{eq:lp} and solve the resulting linear program optimally as before. When rounding an optimal fractional solution to an integral solution, we follow the same analysis steps in Section~\ref{sec:p2p-requests} but need to account for the release times. Our analysis can be easily modified to satisfy the release-time constraints with little or no extra cost in terms of approximation guarantees. 

For point requests, we have the following results with similar approximation ratios as for the variants without release times. We prove them in Sections~\ref{sec:kmlp-release} and \ref{sec:kmlp-p2p-release}, respectively:

\begin{theorem} \label{thm:kmlp-release}
For point requests with release times, we obtain the following 
approximation guarantees in time $\poly(\inpsize, \frac{1}{\epsilon})$ for any $\epsilon > 0$:
\begin{enumerate}[(1)]
\item A $(2\mu^* + \epsilon)$-approximation for MLP and single-depot $k$-MLP; ($2 \mu^* \approx 7.183$)
\item A $(13.728 + \epsilon)$-approximation for multi-depot $k$-MLP.
\end{enumerate}
\end{theorem}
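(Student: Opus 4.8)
The plan is to re-use the single-depot pipeline of Section~\ref{sec:single-depot-kmlp} (solve \eqref{eq:lp}, pack arborescences at each time level via Theorem~\ref{thm:arbpoly}, build the point-set $C$, take its lower envelope $f$, and route a shortest path in $\cg(f(1),\dots,f(n))$) essentially unchanged, and to encode the release times through two small modifications. First I would add to \eqref{eq:lp} the constraints $x^i_{v,t}=0$ whenever $t<T_v$. Any route respecting the release times induces a feasible solution of the augmented LP of the same objective value, so its optimum still lower-bounds the release-constrained optimum. Moreover, with this constraint the set $S(t)=\{u:\sum_{t'\le t}x'_{u,t'}>0\}$ used in Algorithm~\ref{alg:p2p-rounding} now satisfies $T_u\le t$ for every $u\in S(t)$; hence every request spanned by a packed tree $Q^t_\ell$ at level $t$ is already released by time $t$.

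For part~(1) I would keep the rounding identical but add to $C$ the point $\bigl(|V(Q^t_\ell)\cap S(t)|,\ \tfrac{2c(Q^t_\ell)}{k}+2t\bigr)$, i.e. with an explicit ``waiting budget'' summand $2t$, and I would modify the concatenation so that, before serving the tour built from $Q^*_\ell$, each vehicle waits until the current time reaches $t^*_\ell$ (whenever it is behind). Because every request in $Q^*_\ell$ lies in $S(t^*_\ell)$, it is then served at time at least $t^*_\ell\ge T_v$, so all release constraints are met. Since $\E[c(Q^t_\ell)]\le kt$ by Theorem~\ref{thm:arbpoly}, the $y$-coordinate of each added point is at most $4t$ in expectation, giving the analogue $\int_1^nf\le 4\sum_{u,t}tx'_{u,t}$ of Lemma~\ref{lem:lpsbound}(i), and Corollary~\ref{cor:concat_graph} would then bound the total latency by $\tfrac{\mu^*}{2}\cdot 4\sum_{u,t}tx'_{u,t}=2\mu^*\sum_{u,t}tx'_{u,t}$, matching Table~\ref{tbl:results}. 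The point to stress is the asymmetry between the two cases: for single-depot $k$-MLP the summand $2t$ is \emph{already present} in the release-free analysis (it pays for travelling out to the frontier and back), so the release-time waiting is absorbed for free and no loss is incurred; for MLP the release-free point carries no $t$-term (its $y$-coordinate is just $2c(Q^t_\ell)$), so inserting $2t$ is a genuine change that doubles the guarantee from $\mu^*$ to $2\mu^*$.

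For part~(2) I would apply the very same two ingredients—the constraints $x^i_{v,t}=0$ for $t<T_v$ and a $2t$ waiting-budget summand—on top of the multi-depot framework of \citet{PS15} (whose multi-depot LP is referenced in Section~\ref{sec:lp-framework}) rather than the single-depot one. The structural fact $T_u\le t$ for $u\in S(t)$ and the wait-until-$t^*_\ell$ device are identical, and only the multi-depot combination step changes the numerical constant, producing $13.728$ in place of the release-free $8.497$.

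The main obstacle is the latency accounting for the inserted waits: one must show that delaying the level-$\ell$ tour to time $t^*_\ell$ does not inflate the constant beyond $2\mu^*$. The crucial observation is that waiting \emph{resets} a vehicle's clock to $t^*_\ell$ rather than adding on top of the previously accumulated tour durations, so a request covered at level $\ell$ falls into one of two regimes: a travel-dominated regime, where its latency equals the release-free value and is charged to $\cg(f(1),\dots,f(n))$ exactly as in the proof of Theorem~\ref{thm:kmlp-p2p}; or a release-dominated regime, where its latency is at most $t^*_\ell+\tfrac{f(\ell)}{2}\le f(\ell)$, to be charged against the $2t^*_\ell$ share of the edge length $f(\ell)\bigl(n-\tfrac{o+\ell}{2}\bigr)$. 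Reconciling these release-dominated contributions with the per-request budget of the concatenation—and arguing that the time levels $t^*_\ell$ may be taken non-decreasing along the shortest path $P_C$, which is plausible since $S(t)$ grows with $t$—is where the bulk of the care lies; the multi-depot constant of part~(2) is the remaining quantitative work, as it requires re-running the combination argument of \cite{PS15} in the presence of the extra $2t$ term.
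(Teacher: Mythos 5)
Your LP modification is exactly the paper's constraint \eqref{releasetime}, the structural fact that every $u\in S(t)$ has $T_u\le t$ is also the paper's starting point, and your intuition about the asymmetry (single-depot $k$-MLP already carries a $2t$ term, MLP does not, whence $\mu^*\to 2\mu^*$) is the right one. But the concrete mechanism you propose for part (1) — making each vehicle idle at the root until time $t^*_\ell$ before traversing the level-$\ell$ tour — has a genuine gap: the root-wait is \emph{sequential} with the driving, so whenever it is actually invoked (i.e.\ when the vehicle's clock $\tau$ satisfies $\tau<t^*_\ell$, which certainly happens at the first node of $P_C$) the clock increment of that level is up to $(t^*_\ell-\tau)+\frac{2c(Q^*_\ell)}{k}+2t^*_\ell$, exceeding $f(\ell)=\frac{2c(Q^*_\ell)}{k}+2t^*_\ell$; likewise a covered request costs in expectation up to $(t^*_\ell-\tau)+\frac{f(\ell)}{2}$ against its budget of $\frac{f(\ell)}{2}$. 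So the edge lengths $f(\ell)\bigl(n-\frac{o+\ell}{2}\bigr)$ of the concatenation graph no longer dominate the added latency, and the induction in the proof of Theorem~\ref{thm:kmlp-p2p} breaks. Repairing it by enlarging the $y$-coordinates of the points in $C$ to cover the wait forces $\E[y]\le 5t$ or $6t$ rather than $4t$ in Lemma~\ref{lem:lpsbound}(i), i.e.\ a ratio of $2.5\mu^*$--$3\mu^*$, not the claimed $2\mu^*$. Your two-regime split and the unproven monotonicity of $t^*_\ell$ along $P_C$ do not close this, because in the release-dominated regime the excess $t^*_\ell-\tau$ has no budget to be charged to.

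The paper closes exactly this gap by never waiting at the root: the vehicle drives out immediately and waits only at request nodes, as release times require. Since every request on the tour has release time at most $t^*_\ell$ (by \eqref{releasetime}), every wait ends by clock time $t^*_\ell$; hence the outbound root edge \emph{and all waiting combined} consume at most $t^*_\ell$ of elapsed time (they overlap inside one $t^*_\ell$ budget instead of adding), after which at most $\frac{2c(Q^*_\ell)}{k}+t^*_\ell$ of pure driving remains. The traversal time including waiting is therefore bounded by the same $\frac{2c(Q^*_\ell)}{k}+2t^*_\ell$ as the release-free tour length, so the analysis of Theorem 6.1 in \cite{PS15} applies verbatim with zero loss, giving $2\mu^*+\epsilon$; the interleaving of waiting with driving is the essential point your sequential root-wait forfeits. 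For part (2), your plan (rerun the multi-depot algorithm of \cite{PS15} with the release constraints) coincides with the paper's, but the substance there is quantitative and missing from your proposal: in that analysis waiting genuinely is additive — the per-iteration tour time becomes $3t_j$ in place of $2t_j$ in Claim 5.2(iii) of \cite{PS15} — and one must re-optimize the geometric base $c$, obtaining $\frac{(2c+1)(1-e^{-1})}{\ln c\,(1-ce^{-1})}+\epsilon\approx 13.728$ at $c\approx 1.587$; asserting the constant without this computation leaves part (2) unestablished.
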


%
%

For point-to-point requests, we also have constant-factor approximation algorithms:

\begin{theorem} \label{thm:kmlp-p2p-release}
For point-to-point requests with release times, we have the following 
approximation guarantees in time $\poly(\inpsize, \frac{1}{\epsilon})$ for any $\epsilon > 0$:
\begin{enumerate}[(1)]
\item A $(2\mu^* + \epsilon)$-approximation for MLP; ($2\mu^* \approx 7.183$) 
\item A $(2.5\mu^* + \epsilon)$-approximation for single-depot $k$-MLP; ($2.5 \mu^* \approx 8.978$) 
\item A $(41.184 + \epsilon)$-approximation for multi-depot $k$-MLP.
\end{enumerate}
\end{theorem}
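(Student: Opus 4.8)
The three parts split into two strategies. The plan for the multi-depot bound~(3) is to reuse the factor-$3$ backtracking reduction of Lemma~\ref{lem:backtrack-reduction} and feed it the $(13.728+\epsilon)$-approximation for point requests with release times from Theorem~\ref{thm:kmlp-release}(2); since $3\cdot 13.728=41.184$, composing the two gives the claimed ratio. Concretely I would build the point-request instance with $c'_{v_iv_j}=c_{s_is_j}+c_{s_id_i}+c_{s_jd_j}$ and attach to point $v_i$ the release time of request~$i$ (shifted by $c_{s_id_i}$ to match the completion-time convention). The only thing to add to Lemma~\ref{lem:backtrack-reduction} is that $\OPT\le\OPT'\le 3\cdot\OPT$ survives the release times, and this is immediate from the monotonicity $\LAT(P,i)\le\LAT(P_b,i)$ of Lemma~\ref{lem:backtrack}: a release-time-feasible P2P route maps to a backtracking (hence point-request) route whose per-request completion times only increase, so it stays feasible and yields $\OPT'\le 3\cdot\OPT$, while an optimal point-request route is realized by a backtracking path that is itself a valid P2P route with identical completion times, so it stays feasible and yields $\OPT\le\OPT'$.

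For MLP and single-depot $k$-MLP (parts~(1)--(2)) the plan is to keep Algorithm~\ref{alg:p2p-rounding} and its MLP specialization structurally intact and graft on the release-time machinery of Section~\ref{sec:kmlp-release}. First I would add $x^i_{v,t}=0$ for all $t<T_v$ to \eqref{eq:lp}; the separation oracle for \eqref{jcov3} is unchanged, so the constrained LP is still solvable in $\poly(\inpsize,\frac1\epsilon)$ time, and since every release-time-feasible route induces a feasible LP point, its optimum remains at most $(1+\epsilon)\OPT$. The single structural fact I would then exploit is that every node of $S(t)$ now has release time at most $t$: if $u\in S(t)$ then $x'_{u,t'}>0$ for some $t'\le t$, forcing $T_u\le t'\le t$. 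Hence it suffices to run the frontier-$t^*_\ell$ tour so that none of its requests is served before time $t^*_\ell$, which respects all their release times simultaneously; because this constrains only the absolute start time of a tour and not its orientation, the randomized $\frac34/\frac14$ forward/reverse traversal from the proof of Theorem~\ref{thm:kmlp-p2p} is left untouched.

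The resulting constants are governed entirely by the concatenation-graph point. For single depot I would leave $\bigl(|V(Q^t_\ell)\cap S(t)|,\frac{3c'(Q^t_\ell)}{k}+2t\bigr)$ unchanged: the $2t$ term that paid in Lemma~\ref{lem:break} for the two root-connecting edges of each broken sub-tour does double duty as the time buffer that lets the tour start no earlier than $t^*_\ell$, so $\E[\frac{3c'(Q^t_\ell)}{k}+2t]\le 5t$ exactly as before and the ratio stays $2.5\mu^*$. For MLP the release-free point $\bigl(|V(Q^t_\ell)\cap S(t)|,2c'(Q^t_\ell)\bigr)$ carries no such term, so I would replace it by $\bigl(|V(Q^t_\ell)\cap S(t)|,2c'(Q^t_\ell)+2t\bigr)$; using $\E[c'(Q^t_\ell)]\le t$ (here $k=1$) this has expectation at most $4t$, so part~(i) of Lemma~\ref{lem:lpsbound} becomes $\int_1^n f(x)\,dx\le 4\sum_{v,t}tx'_{v,t}$ and Corollary~\ref{cor:concat_graph} gives $\frac{\mu^*}{2}\cdot 4=2\mu^*$. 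The rest of the proof of Theorem~\ref{thm:kmlp-p2p} then transfers essentially verbatim.

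The delicate step, and the one I expect to be the main obstacle, is justifying that ``serve frontier $t^*_\ell$ no earlier than $t^*_\ell$'' is genuinely paid for by the $2t^*_\ell$ term and does not compound across tours: naively waiting at the root to reach time $t^*_\ell$ before a tour delays every later request, an $\Omega(n)\cdot t^*_\ell$ cost that the single concatenation-graph edge of length $\bigl(n-\frac{o+\ell}{2}\bigr)f(\ell)$ cannot absorb. I would handle this exactly as the point-request analysis behind Theorem~\ref{thm:kmlp-release} does, charging each request's release-time delay against the $t^*_\ell$ already present in its per-request latency increment ($\frac{f(\ell)}{2}$ when newly covered, $f(\ell)$ when still uncovered) rather than against an explicit global wait, and then verifying that this bound coexists with the $\frac34/\frac14$ reverse-traversal expectation. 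That joint verification is the only genuinely new calculation; everything else is a direct composition of Sections~\ref{sec:p2p-requests} and~\ref{sec:kmlp-release}.
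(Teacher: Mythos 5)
Your proposal follows the paper's proof essentially step for step. For parts (1) and (2) you do exactly what the paper does: add the release-time constraints \eqref{releasetime} to \eqref{eq:lp}, round with Algorithm~\ref{alg:p2p-rounding}, and observe that the waiting incurred on a tour built for time-point $t^*_\ell$ is absorbed by the $t^*_\ell$ terms already present in the tour-length bound, so the concatenation analysis transfers with traversal times in place of lengths (the paper's ``the waiting time for a tour for time $t$ is at most $t$''). Your choice of concatenation-graph points --- keeping $\frac{3c'(Q^t_\ell)}{k}+2t$ in the single-depot case and replacing $2c'(Q^t_\ell)$ by $2c'(Q^t_\ell)+2t$ for MLP --- reproduces the paper's ratios $2.5\mu^*+\epsilon$ and $2\mu^*+\epsilon$. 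Part (3) is also the paper's route: the backtracking reduction of Section~\ref{sec:multi-depot-kmlp} composed with Theorem~\ref{thm:kmlp-release}(2), giving $3\times 13.728 = 41.184$; your shifted release times $T_i+c_{s_i d_i}$ correctly implement the completion-time convention that is implicit in the paper's recurrence for $\LAT^+(P_b,i)$.

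The one genuine weak point is your justification of the reduction in part (3). The inequality $\OPT'\le 3\cdot\OPT$ is \emph{not} ``immediate from the monotonicity $\LAT(P,i)\le\LAT(P_b,i)$'': monotonicity gives $\LAT^+(P)\le\LAT^+(P_b)$, hence $\OPT\le\OPT'$, whereas $\OPT'\le 3\cdot\OPT$ needs the opposite-direction bound $\LAT^+(P_b)\le 3\LAT^+(P)$, and the release-free Lemma~\ref{lem:backtrack} cannot simply be cited for it because the latency recurrences now carry $\max\{\cdot\,,T_i\}$ operators. The paper proves this as a separate statement (Lemma~\ref{lem:backtrack-release}) by an induction establishing $\LAT^+(P_b,i)\le 2\LAT^+(P,i-1)+\LAT^+(P,i)$, and the induction must pull factors of $2$ through the max, e.g.\ via
$2\bigl(\LAT^+(P,i-2)+c_{s_{i-1}d_{i-1}}\bigr)\le 2\max\{\LAT^+(P,i-2)+c_{s_{i-1}d_{i-1}},\,T_{i-1}\}\le 2\LAT^+(P,i-1)$,
which is exactly why the paper calls these details ``more involved.'' The statement is true, so your composition stands, but as written this step is asserted rather than proved, and the reason you give for it is the wrong inequality.
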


%
%

As in the case without release times, the total latency objective is computed with respect to the beginning of the paths at the root nodes. This is analogous to the total-completion-time objective that is widely used in the scheduling literature (see, e.g.,~\cite{S06}). While, admittedly, measuring the latency of a request $R_i$ by (time taken to reach $d_i$) $-T_i$, which accounts for the actual time $R_i$ spends in the system, yields a more natural objective that is akin to the {\em flow-time} objective in scheduling, the resulting optimization problem is much harder to approximate (as is the case in scheduling). Therefore, following much of the work in scheduling, we consider our latency objective, which is a reasonable first step in studying ride-sharing and delivery problems with release times.

To distinguish from the latency objective without release times, we use $\LAT^+$ to denote latencies and incorporate the release times by making vehicles wait at the source of a request if it has not been released. For example, for path $P = r_0 s_1 d_1 \cdots s_n d_n$, note that the latency of the $i$-th request on $P$ is 
\[
\LAT^+(P, i) = \begin{cases} \max\{ \LAT^+(P, i-1) + c_{d_{i-1} s_i}, T_i \} + c_{s_i d_i}, & i > 1 \\
	0, & i = 0 \end{cases} \,.
\]
Similarly, the length of a path/tour needs to correspond to the traversal time subject to release-time constraints. Given a sequence of requests to be satisfied, the traversal time for the sequence consists of {\em driving times} for moving from one location to another and {\em waiting times} for staying fixed at a location before a request is released. We can still concatenate and shortcut in the sense that the total traversal time of concatenation of two paths meeting at a node is at most the sum of traversal times of the paths, and the traversal time of a shortcut path is at most that of the original path. The algorithms and their analyses stay essentially the same for traversal times as for lengths of paths/tours. 
 
\subsection{Point Requests: Proof of Theorem~\ref{thm:kmlp-release}} \label{sec:kmlp-release}

\paragraph*{Part (1).}
We incorporate the following release-time constraints into the linear program \eqref{eq:lp} and solve for an optimal fractional solution as before:
\begin{equation} \label{releasetime}
x^i_{v,t}=0\ \text{if $T_i>t$} \qquad \forall v,t,i \,.
\end{equation}
(Since we have point requests, the $c'$ edge costs used in Algorithm~\ref{alg:p2p-rounding} are simply the bidirected $c$ costs.)
We now need to incorporate the $R_i$'s in our upper bound $\Time$,
but $\log\Time$ is still polynomially bounded and we can obtain a $(1+\epsilon)$-approximate solution to the LP.
Then, we round the fractional solution as in Algorithm~\ref{alg:p2p-rounding} by creating a set of tours starting and ending at the root 
and concatenating them. 



In the analysis, we upper bound the traversal times of the tours using the same upper bound (or nearly the same) we have used for the lengths of the tours, and, hence, obtain the same or similar approximation guarantees with respect to the total latency objective as for the variants without release times.
Specifically, Theorem 6.1 in \citet{PS15}, which is the equivalent of Theorem~\ref{thm:kmlp-p2p} for point requests with release times, obtains an approximation ratio of $2\mu^*$ for single-depot $k$-MLP with point requests without release times. 
We observe that the same upper bound of $\frac{2c(Q^*_\ell)}{k} + 2t^*_\ell$ on the $c$-length of a tour obtained for time-point $t^*_\ell$ used in Theorem 6.1 in \cite{PS15} also upper bounds the traversal time of any tour $Z_{i,\ell}$ that we obtain in Step 7 of Algorithm~\ref{alg:p2p-rounding}. This is because the upper bound of $t^*_\ell$ on the lengths of the edges connecting the ends of $Z_{i,\ell}$ to the root also upper bounds the waiting time portion of the traversal:
since all requests on $Z_{i, \ell}$ have release time at most $t^*_\ell$ by the release-time constraints \eqref{releasetime}, $t^*_\ell$ also accounts for the total waiting time of the traversal of $Z_{i,\ell}$. Thus, we do not incur any extra cost; the rest of the analysis is the same as that in~\cite{PS15}, and the approximation guarantee follows.

\paragraph*{Part (2).}
For this result, we use Algorithm 1 in \cite{PS15} and its analysis, which uses a related but different linear program. We incorporate the release-time constraints \eqref{releasetime} as before. Due to the space constraints, we omit the linear program and refer to \cite{PS15} for details. As before, we upper bound the traversal times of tours obtained for some time-point $t$ by (the total driving time of at most $t$ to visit the requests in the tour) + $t$, since $t$ is an upper bound on the waiting time incurred as all requests on the tour have release times at most $t$. 
Consequently, following the analysis in \cite{PS15}, one can argue that for any constant $1<c<e$, we obtain an approximation ratio of at most $\frac{(2c+1)(1-e^{-1})}{\ln c(1-ce^{-1})}+\epsilon$.%
\footnote{Claim 5.2 (iii) in~\cite{PS15} changes as follows: conditioned on the random offset $h$, the expected latency of a node $v$ first covered in iteration $j$ of the algorithm is now at most $(1+\epsilon)(3t_0+3t_1+\ldots+3t_{j-1}+2t_j)\leq\frac{(1+\epsilon)(2c+1)}{c-1}\cdot t_j$}
Taking $c=1.58726$, we obtain an approximation ratio of at most $13.7272+\epsilon$.

\subsection{Point-to-Point Requests: Proof of Theorem~\ref{thm:kmlp-p2p-release}} \label{sec:kmlp-p2p-release}

The proof follows the same reasoning as in the proof of Theorem~\ref{thm:kmlp-release}. For parts (1) and (2), we move to a directed metric $c'$ as in Step 1 of Algorithm~\ref{alg:p2p-rounding}, incorporate release times as in \eqref{releasetime}, solve the resulting LP \eqref{eq:lp}, and round it using Algorithm~\ref{alg:p2p-rounding}. The guarantees in parts (1) and (2) rely on the analysis in Theorems~\ref{thm:mlp-p2p} and~\ref{thm:kmlp-p2p} respectively, and the now-familiar fact that the waiting time for a tour for time $t$ is at most $t$. 

For part (3), we use the following factor-3 reduction to the point-requests and use part (2) of Theorem~\ref{thm:kmlp-release}. This yields a $(41.184+\epsilon)$-approximation. 
The reduction is analogous to the constant-factor reduction presented in Section~\ref{sec:multi-depot-kmlp} but proof details are
\if\conference1
 more involved and deferred to the full version.
\else
 more involved.
\fi
For path $P = r_0 s_1 d_1 \cdots s_n d_n$, we have the following relations for the latencies of the corresponding backtracking path $P_b$,
\[
\LAT^+(P_b, i) = \begin{cases} \max\left\{ 
\!\begin{aligned}
	& \LAT^+(P_b, i-1) \\
	& + c_{d_{i-1} s_{i-1}} + c_{s_{i-1} s_i}
\end{aligned} 
, T_i \right\} + c_{s_i d_i}, & i > 1 \\
0, &i = 0 \end{cases} \,.
\]

We have the following property of the backtracking paths with release times and, hence, the approximation guarantee via the reduction. 
\if\conference0
The proof of Lemma~\ref{lem:backtrack-release-reduction} follows the same steps as that of Lemma~\ref{lem:backtrack-reduction}.
\fi

\begin{lemma} \label{lem:backtrack-release}
For any path $P$ and corresponding backtracking path $P_b$, we have
$\LAT^+(P) \leq \LAT^+(P_b) \leq 3 \LAT^+(P)$.
\end{lemma}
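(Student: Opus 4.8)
The plan is to follow the template of the proof of Lemma~\ref{lem:backtrack}, establishing by induction on $i$ the per-request double inequality
\[
\LAT^+(P, i) \leq \LAT^+(P_b, i) \leq 2\,\LAT^+(P, i-1) + \LAT^+(P, i)
\]
for every $i$, and then summing over $i$. The right inequality, combined with the monotonicity $\LAT^+(P,i-1)\le\LAT^+(P,i)$ (which holds because $\LAT^+(P,i)=\max\{\LAT^+(P,i-1)+c_{d_{i-1}s_i},T_i\}+c_{s_id_i}\ge\LAT^+(P,i-1)$), gives $2\,\LAT^+(P,i-1)+\LAT^+(P,i)\le 3\,\LAT^+(P,i)$, so summation yields $\LAT^+(P_b)\le 3\,\LAT^+(P)$; the left inequality yields $\LAT^+(P)\le\LAT^+(P_b)$. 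The base case $i=1$ holds with equality, since both latencies equal $\max\{c_{r_0 s_1},T_1\}+c_{s_1 d_1}$ (recall $d_0=s_0=r_0$, so the backtracking term $c_{d_0 s_0}+c_{s_0 s_1}$ reduces to $c_{r_0 s_1}$).

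For the lower bound I would argue as follows. Assuming $\LAT^+(P,i-1)\le\LAT^+(P_b,i-1)$, the triangle inequality gives $c_{d_{i-1}s_i}\le c_{d_{i-1}s_{i-1}}+c_{s_{i-1}s_i}$, so $\LAT^+(P,i-1)+c_{d_{i-1}s_i}\le\LAT^+(P_b,i-1)+c_{d_{i-1}s_{i-1}}+c_{s_{i-1}s_i}$. Applying monotonicity of $\max\{\,\cdot\,,T_i\}$ and then adding $c_{s_i d_i}$ to both sides gives $\LAT^+(P,i)\le\LAT^+(P_b,i)$, completing this half of the induction.

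The upper bound is the main obstacle, because the $\max\{\,\cdot\,,T_i\}$ terms are nonlinear and block the purely additive manipulation used in Lemma~\ref{lem:backtrack}. The key idea is to prove the stronger statement ``under the max'',
\[
\max\{\LAT^+(P_b,i-1)+c_{d_{i-1}s_{i-1}}+c_{s_{i-1}s_i},\,T_i\}\ \le\ 2\,\LAT^+(P,i-1)+\max\{\LAT^+(P,i-1)+c_{d_{i-1}s_i},\,T_i\},
\]
after which adding $c_{s_i d_i}$ to both sides gives exactly $\LAT^+(P_b,i)\le 2\,\LAT^+(P,i-1)+\LAT^+(P,i)$. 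To establish this max-inequality it suffices to bound each of the two arguments on the left by the right-hand side. The $T_i$ argument is immediate, since $T_i\le\max\{\LAT^+(P,i-1)+c_{d_{i-1}s_i},T_i\}$ and $2\,\LAT^+(P,i-1)\ge0$. For the other argument I would use the induction hypothesis $\LAT^+(P_b,i-1)\le 2\,\LAT^+(P,i-2)+\LAT^+(P,i-1)$ together with the triangle inequality $c_{s_{i-1}s_i}\le c_{s_{i-1}d_{i-1}}+c_{d_{i-1}s_i}$ and the symmetry $c_{s_{i-1}d_{i-1}}=c_{d_{i-1}s_{i-1}}$, obtaining
\[
\LAT^+(P_b,i-1)+c_{d_{i-1}s_{i-1}}+c_{s_{i-1}s_i}\ \le\ 2\bigl(\LAT^+(P,i-2)+c_{s_{i-1}d_{i-1}}\bigr)+\LAT^+(P,i-1)+c_{d_{i-1}s_i}.
\]
Finally the recurrence for $\LAT^+(P,\cdot)$ gives $\LAT^+(P,i-2)+c_{s_{i-1}d_{i-1}}\le\LAT^+(P,i-1)$ (dropping the release-time maximum and the nonnegative drive term $c_{d_{i-2}s_{i-1}}$), while trivially $\LAT^+(P,i-1)+c_{d_{i-1}s_i}\le\max\{\LAT^+(P,i-1)+c_{d_{i-1}s_i},T_i\}$; together these bound the first argument by the right-hand side, closing the induction and hence proving the lemma. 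The only genuinely new ingredient relative to Lemma~\ref{lem:backtrack} is this observation that the induction can be pushed through the release-time maxima by comparing the two $\max$ expressions argument by argument.
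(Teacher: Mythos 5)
Your proof is correct and takes essentially the same route as the paper's: the identical induction establishing $\LAT^+(P,i)\le\LAT^+(P_b,i)\le 2\,\LAT^+(P,i-1)+\LAT^+(P,i)$ per request, with the same ingredients (triangle inequality, induction hypothesis, and the observation that $\LAT^+(P,i-2)+c_{s_{i-1}d_{i-1}}\le\LAT^+(P,i-1)$ by dropping the release-time maximum and a nonnegative drive term). The only cosmetic difference is how the release-time maxima are handled: you bound the two arguments of the left-hand $\max$ separately by the right-hand side, whereas the paper pulls the additive constant $2\bigl(\LAT^+(P,i-2)+c_{s_{i-1}d_{i-1}}\bigr)$ outside the $\max$ and then bounds it by $2\,\LAT^+(P,i-1)$; the two manipulations are equivalent.
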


\begin{lemma} \label{lem:backtrack-release-reduction}
Given point-to-point requests with release times, the optimal backtracking path has total latency at most 3 times that of the optimal path. More generally, an $\alpha$-approximate backtracking path has total latency at most 3$\alpha$ times that of the optimal path. 
\end{lemma}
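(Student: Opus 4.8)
The plan is to mirror the inductive argument used for Lemma~\ref{lem:backtrack} in the release-free case, establishing the per-request two-sided bound
\[
\LAT^+(P, i) \leq \LAT^+(P_b, i) \leq 2\LAT^+(P, i-1) + \LAT^+(P, i) \quad \text{for all } i,
\]
and then summing over $i$. For the summation step, $\LAT^+(P, \cdot)$ is nondecreasing in $i$, since its recurrence adds only nonnegative $c$-terms on top of $\LAT^+(P, i-1)$; hence $2\LAT^+(P, i-1) + \LAT^+(P, i) \leq 3\LAT^+(P, i)$, which gives $\LAT^+(P_b) \leq 3\LAT^+(P)$, while the lower bound $\LAT^+(P) \leq \LAT^+(P_b)$ follows directly from the left inequality. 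The base case $i = 1$ is immediate because both recurrences evaluate to $\max\{c_{r_0 s_1}, T_1\} + c_{s_1 d_1}$.

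First I would handle the lower bound in the inductive step. Writing $\LAT^+(P, i) = \max\{\LAT^+(P, i-1) + c_{d_{i-1} s_i}, T_i\} + c_{s_i d_i}$ and the analogous expression for $P_b$ whose inner term is $\LAT^+(P_b, i-1) + c_{d_{i-1} s_{i-1}} + c_{s_{i-1} s_i}$, the induction hypothesis $\LAT^+(P, i-1) \leq \LAT^+(P_b, i-1)$ together with the triangle inequality $c_{d_{i-1} s_i} \leq c_{d_{i-1} s_{i-1}} + c_{s_{i-1} s_i}$ shows that the inner term for $P$ is at most the one for $P_b$; since $T_i$ is common to both and $\max$ is monotone in its first argument, the lower bound follows after adding $c_{s_i d_i}$.

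The main work, and the principal obstacle relative to Lemma~\ref{lem:backtrack}, is the upper bound, where the outer $\max\{\cdot, T_i\}$ must be carried through the algebra. Let $A := \LAT^+(P_b, i-1) + c_{d_{i-1} s_{i-1}} + c_{s_{i-1} s_i}$ and $B := \LAT^+(P, i-1) + c_{d_{i-1} s_i}$ be the inner arguments, so that after cancelling the common $c_{s_i d_i}$ the claim reduces to $\max\{A, T_i\} \leq 2\LAT^+(P, i-1) + \max\{B, T_i\}$. I would first reestablish the release-free bound $A \leq 2\LAT^+(P, i-1) + B$ exactly as in Lemma~\ref{lem:backtrack}: apply the induction hypothesis $\LAT^+(P_b, i-1) \leq 2\LAT^+(P, i-2) + \LAT^+(P, i-1)$, use the triangle inequality $c_{s_{i-1} s_i} \leq c_{s_{i-1} d_{i-1}} + c_{d_{i-1} s_i}$ and the symmetry $c_{d_{i-1} s_{i-1}} = c_{s_{i-1} d_{i-1}}$, and then bound $\LAT^+(P, i-2) + c_{s_{i-1} d_{i-1}} \leq \LAT^+(P, i-1)$, which holds because the recurrence for $\LAT^+(P, i-1)$ adds at least $c_{s_{i-1} d_{i-1}}$ on top of $\LAT^+(P, i-2)$.

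Given $A \leq 2\LAT^+(P, i-1) + B$, I would close the argument with a two-case analysis on the outer max, which is the crucial place where the release times enter. If $A \leq T_i$, then $\max\{A, T_i\} = T_i \leq \max\{B, T_i\} \leq 2\LAT^+(P, i-1) + \max\{B, T_i\}$ since $\LAT^+(P, i-1) \geq 0$; if $A > T_i$, then $\max\{A, T_i\} = A \leq 2\LAT^+(P, i-1) + B \leq 2\LAT^+(P, i-1) + \max\{B, T_i\}$. In either case the desired inequality holds. The key insight making this clean is that the \emph{same} release time $T_i$ appears in both recurrences while the correction term $2\LAT^+(P, i-1)$ is nonnegative, so the common waiting constraint never degrades the ratio; this is precisely the extra observation needed beyond the release-free proof of Lemma~\ref{lem:backtrack}, and it is what lets the same factor of $3$ survive the introduction of release times.
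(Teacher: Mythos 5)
Your proposal is correct and takes essentially the same route as the paper: the paper proves the reduction via Lemma~\ref{lem:backtrack-release}, established by the same induction on the two-sided per-request bound $\LAT^+(P,i) \leq \LAT^+(P_b,i) \leq 2\LAT^+(P,i-1) + \LAT^+(P,i)$, using the triangle inequality and then summing. The only difference is organizational --- you isolate the release-free inequality $A \leq 2\LAT^+(P,i-1) + B$ and finish with a two-case analysis on the outer $\max\{\cdot,T_i\}$, whereas the paper pulls the nonnegative term $2\bigl(\LAT^+(P,i-2)+c_{s_{i-1}d_{i-1}}\bigr)$ out of the $\max$ first and then bounds it by $2\LAT^+(P,i-1)$ through a chain of $\max$ inequalities involving $T_{i-1}$; the two manipulations are equivalent.
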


\if\conference0
In the remaining of the section, we prove Lemma~\ref{lem:backtrack-release}.
\begin{proof}[Proof of Lemma~\ref{lem:backtrack-release}]
Without loss of generality, we assume $P = r_0 s_1 d_1 \cdots s_n d_n$ and $P_b = r_0 s_1 d_1 s_1 \cdots s_n d_n$ from root $r_0$. 

\noindent {\em (Lower bound)}
We show by induction that $\LAT^+(P, i) \leq \LAT^+(P_b, i)$ for all $i$. Then, it would follow that
\[
\LAT^+(P) = \sum_{i=1}^n \LAT^+(P, i) \leq \sum_{i=1}^n \LAT^+(P_b, i) = \LAT^+(P_b) \,.
\]
The base case when $i=1$ holds since $\LAT^+(P, 1) = \LAT^+(P_b, 1) = \max \{c_{r_0 s_1}, T_1\} + c_{s_1 d_1}$. We now assume $\LAT^+(P, i-1) \leq \LAT^+(P_b, i-1)$ and show $\LAT^+(P, i) \leq \LAT^+(P_b, i)$. Note
\begin{align*}
\LAT^+(P, i) &= \max \{\LAT^+(P, i-1) + c_{d_{i-1} s_i}, T_i \} + c_{s_i d_i} \\
	&\leq \max \{\LAT^+(P_b, i-1) + c_{d_{i-1} s_{i-1}} + c_{s_{i-1} s_i}, T_i \} + c_{s_i d_i} \\
	&= \LAT^+(P_b, i) \,,
\end{align*}
where the inequality follows from the induction hypothesis and the triangle inequality. Hence, the induction holds and the lower bound is proved.

\noindent {\em (Upper bound)}
Similarly, we show that $\LAT^+(P_b, i) \leq 2 \LAT^+(P, i-1) + \LAT^+(P, i)$ for all $i$ by induction. Then, 
\[
\sum_{i=1}^n \left( 2 \LAT^+(P, i-1) + \LAT^+(P, i) \right) \leq 3 \sum_{i=1}^n \LAT^+(P, i)  \,,
\]
and $\LAT^+(P_b) \leq 3 \LAT^+(P)$. The base case holds since $\LAT^+(P_b, 1) = \LAT^+(P, 1)$ and $\LAT^+(P, 0) = 0$. Suppose $\LAT^+(P_b, i-1) \leq 2 \LAT^+(P, i-2) + \LAT^+(P, i-1)$. Then,
\begin{align*}
\LAT^+(P_b, i) &= \max \{ \LAT^+(P_b, i-1) + c_{d_{i-1} s_{i-1}} + c_{s_{i-1} s_i}, T_i \} + c_{s_i d_i} \\
	&\leq \max \{ 2 \LAT^+(P, i-2) + \LAT^+(P, i-1) + \\
	&\quad\quad\quad\quad c_{d_{i-1} s_{i-1}} + c_{s_{i-1} d_{i-1}} + c_{d_{i-1} s_i} , T_i \} + c_{s_i d_i} \\
	&= \max \{ 2 (\LAT^+(P, i-2) + c_{s_{i-1} d_{i-1}}) + \LAT^+(P, i-1) \\
	&\quad\quad\quad\quad + c_{d_{i-1} s_i}, T_i \} + c_{s_i d_i} \\
	&\leq 2 \left( \LAT^+(P, i-2) + c_{s_{i-1} d_{i-1}} \right)  \\
	&\quad + \max \{ \LAT^+(P, i-1) + c_{d_{i-1} s_i}, T_i \} + c_{s_i d_i} \,,
\end{align*}
where the first inequality follows from the triangle inequality and inductive hypothesis, and the second one follows from the $\max$ operator. In the last expression, note the second and last terms together are exactly $\LAT^+(P, i)$. We upper bound the first term by $2 \LAT^+(P, i-1)$:
\begin{alignat*}{1}
2 \big( \LAT^+(P, &~ i-2) + c_{s_{i-1} d_{i-1}} \big) \\
	&\leq \max \{2 \left( \LAT^+(P, i-2) + c_{s_{i-1} d_{i-1}} \right), T_{i-1} \} \\
	&\leq 2 \cdot \max \{\LAT^+(P, i-2) + c_{s_{i-1} d_{i-1}}, T_{i-1} \} \\
	&\leq 2 \left( \max \{\LAT^+(P, i-2), T_{i-1} \} + c_{s_{i-1} d_{i-1}} \right) \\
	&\leq 2 \left( \max \{\LAT^+(P, i-2) + c_{d_{i-2} s_{i-1}}, T_{i-1} \} + c_{s_{i-1} d_{i-1}} \right) \\
	&= 2 \LAT^+(P, i-1) \,,
\end{alignat*}
where the inequalities follow directly from the definition of $\max$ operator. Therefore, it follows that
\[
\LAT^+(P_b, i) \leq 2 \LAT^+(P, i-1) + \LAT^+(P, i) \,,
\]
and the induction holds. 
\end{proof}
\fi


\section{Experiments}
\label{sec:exp}
In this section, we will present a comparison of our algorithm and a natural greedy baseline on two public datasets, {\em viz.}, the Chicago taxi data\footnote{\url http://digital.cityofchicago.org/index.php/chicago-taxi-data-released/} and the NYC taxi data\footnote{\url http://www.nyc.gov/html/tlc/html/about/trip\_record\_data.shtml}. We focus on the single depot k-MLP problem with release times; we also evaluated our algorithms for the k-MLP problem without release times on the same datasets and obtained qualitatively similar results. 

\subsection{Datasets}
The Chicago dataset consists of $27$ million taxi trips from $2013$ to $2016$, consisting of pick-up times, drop-off times, pick-up locations, drop-off locations, trip times and trip lengths. The pick-up and drop-off locations are specified in terms of Census Tracts instead of precise latitude and longitude values. For our analysis and graph construction, we cover the city using equal-sized geographical cells of size $1000$-sq. meters, and map the Census Tracts to these cells. Each taxi trip is then treated as a trip between two cells. We assign pairwise distances between each pair of
cells to be the driving time between  centroids of each cell, computed using Google Maps API.

The New York taxi dataset consists of more than $1$ billion taxi trips from $2009$ to $2015$ with location information. 
We discretize the pick-up and drop-off locations of each trip to their respective geographical cells, and compute pairwise driving distances between all cells in New York.

We randomly choose one weekday for both datasets, 7/17/2015 for Chicago and 5/31/2013 for New York, and consider all taxi rides in a one-hour (5PM to 6PM) and two-hour (5PM to 7PM) window. 
Also, since the datasets do not contain release-times (when the trip-request arrived), we set the release-times to be equal to the pick-up times of the requests. We also choose a random cell in each city as the depot location.
\if\conference1
Note the distribution of taxi rides in Chicago is relatively more spread out than that in New York. See the full paper version for a visualization.
\else

Figure~\ref{fig:demands} plots a random sample of 100 taxi pick-up (white) and drop-off (red) locations in the 1-hour window for Chicago and New York. As seen from the figure, the distribution of taxi rides in Chicago is relatively more spread out. 

\begin{figure}[ht]
\centering
\subfigure[New York City]{
\includegraphics[scale=0.15]{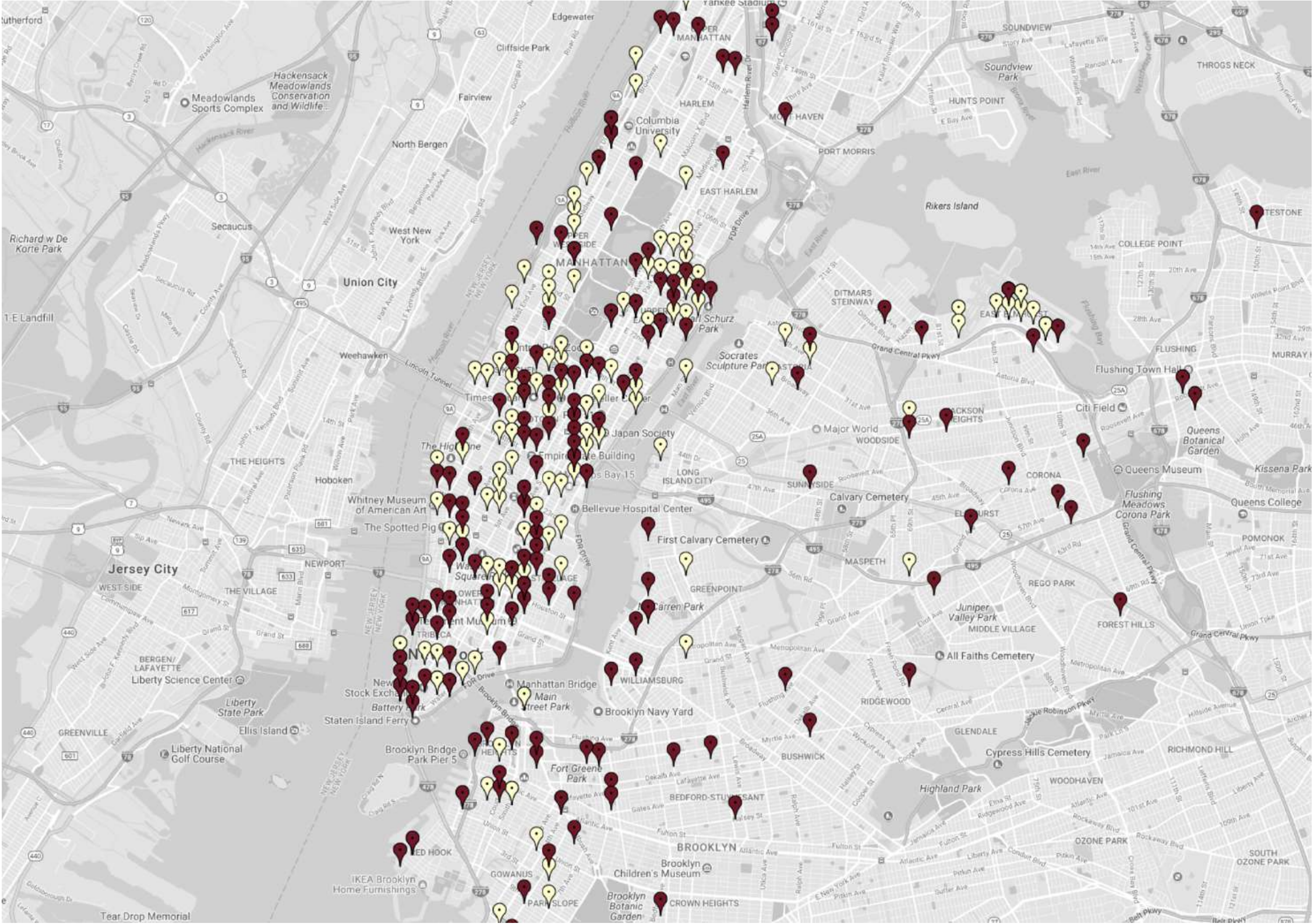}
\label{fig:nyc-demands}
}%
\subfigure[Chicago]{
\includegraphics[scale=0.15]{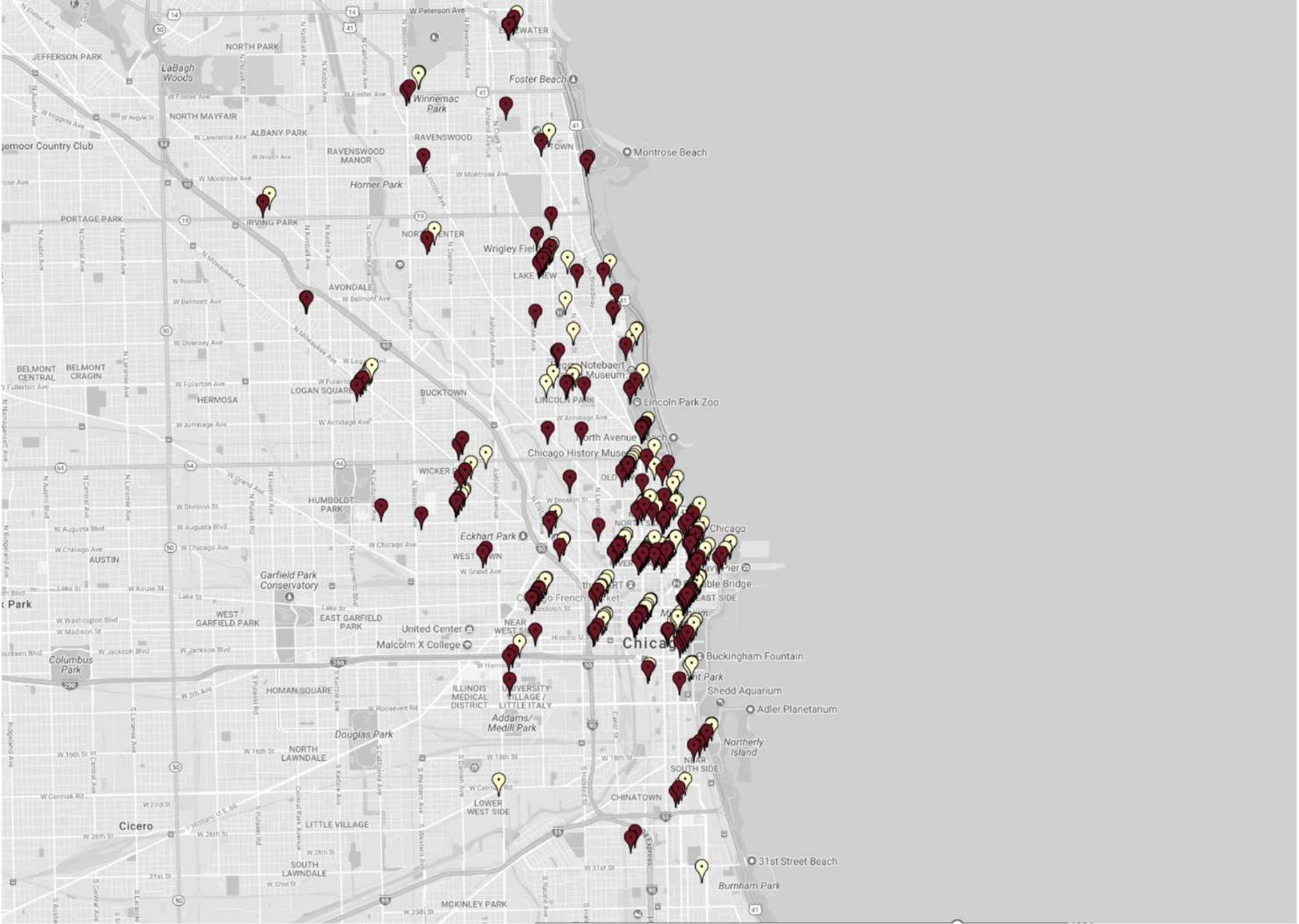}
\label{fig:chi-demands}
}
\caption{Spatial distributions of random samples of 100 ride requests in New York City and Chicago.}
\label{fig:demands}
\end{figure}
\fi

\subsection{Algorithm Implementations}
We first describe the natural greedy algorithm for the problem, that we use as a baseline. For each taxi that becomes idle, the greedy algorithm scans through all unassigned trip requests, and assigns the trip request with the smallest resulting finish-time. The resulting solution has at most $k$ paths branching out from the root and we refer to such solution as a {\em $k$-path solution}. A path of a $k$-path solution will correspond to the path of a taxi.

Even though our Algorithm~\ref{alg:p2p-rounding} from Section~\ref{sec:single-depot-kmlp} has constant approximation guarantees, it is relatively complex and hard to implement in practice. We therefore use the following  practical heuristic (that we call kMLP-Fast) based on Algorithm~\ref{alg:p2p-rounding} (in the sequel, $n$ refers to the number of trip requests and $k$ refers to the number of taxis): 
\begin{enumerate}
\item {Sort the requests in increasing order of their release times}
\item {Define a layer($i$) solution as the greedy $k$-path solution considering only the first $2^i$ requests. Generate layers for $i \in [1,2, \ldots,\log(n)]$ where each layer $i$ consists of at most $k$ paths of the layer($i$) solution.}
\item {Create a ``concatenation graph'' 
as follows: Each node in the concatenation graph corresponds to a path in a given layer. We create an edge from path $P_i$ in layer $i$ to path $P_j$ in layer $j$  for $j > i$, and set the
        edge cost as follows:
        $e(P_i, P_j) = \sum_{t \in P_j} (\LAT(P_i + P_j, t) - \LAT(P_j, t))  + (n  - 2^j)\cdot\text{length}(P_j)$, where $t$ indexes the requests in $P_j$, $P_i + P_j$ represents the path obtained by appending $P_j$ to $P_i$ and removing duplicates, and $\text{length}(P_j)$ is the traversal time of the path (which includes both waiting and driving times).
Essentially, the edge cost is an upper bound on the additional latency incurred by trips in $P_j$ and all subsequent trips in higher layers, when appending $P_j$ to $P_i$. 
        Create a dummy start node $s$ corresponding to an empty path, and connect every node to $s$ with edge costs computed as before. Create a finish node $t$ and connect every path in the last layer ($i=\log(n)$) to $t$, with edge cost $0$.}
\item {Assign a capacity of $1$ to each edge in the concatenation graph, and run a min-cost flow algorithm  from $s$ to $t$ with the target flow amount $k$. Concatenate the paths traversed by each flow (removing duplicates) to generate the final solution.}
\end{enumerate}

The main simplification that the kMLP-Fast algorithm performs (compared to Algorithm~\ref{alg:p2p-rounding}) is that it greedily defines requests for each layer by sorting them by their release times. 
For each layer, it then creates a $k$-path solution using the greedy-heuristic as a subroutine to cover all  nodes in that layer. This differs from the paths obtained in Algorithm~\ref{alg:p2p-rounding}.
Finally, the algorithm concatenates the paths across different layers using a min-cost flow algorithm (as opposed to a shortest path in the concatenation-graph, since we now work with at most $k$ paths per layer.)

\subsection{Latency Objective}
We compare the kMLP-Fast and greedy algorithms on the total-latency objective, which includes both the waiting time and the driving time for a request. Figure~\ref{fig:comp} illustrates the performance of \nolinebreak \mbox{the algorithm on both datasets.} Note that on the y-axis, we report values representing the {\em sum} of the corresponding metric over all requests in the observed interval. 

\begin{figure}[b]
\vspace*{-5ex}
\centering
\subfigure[New York City]{
\includegraphics[scale=0.19]{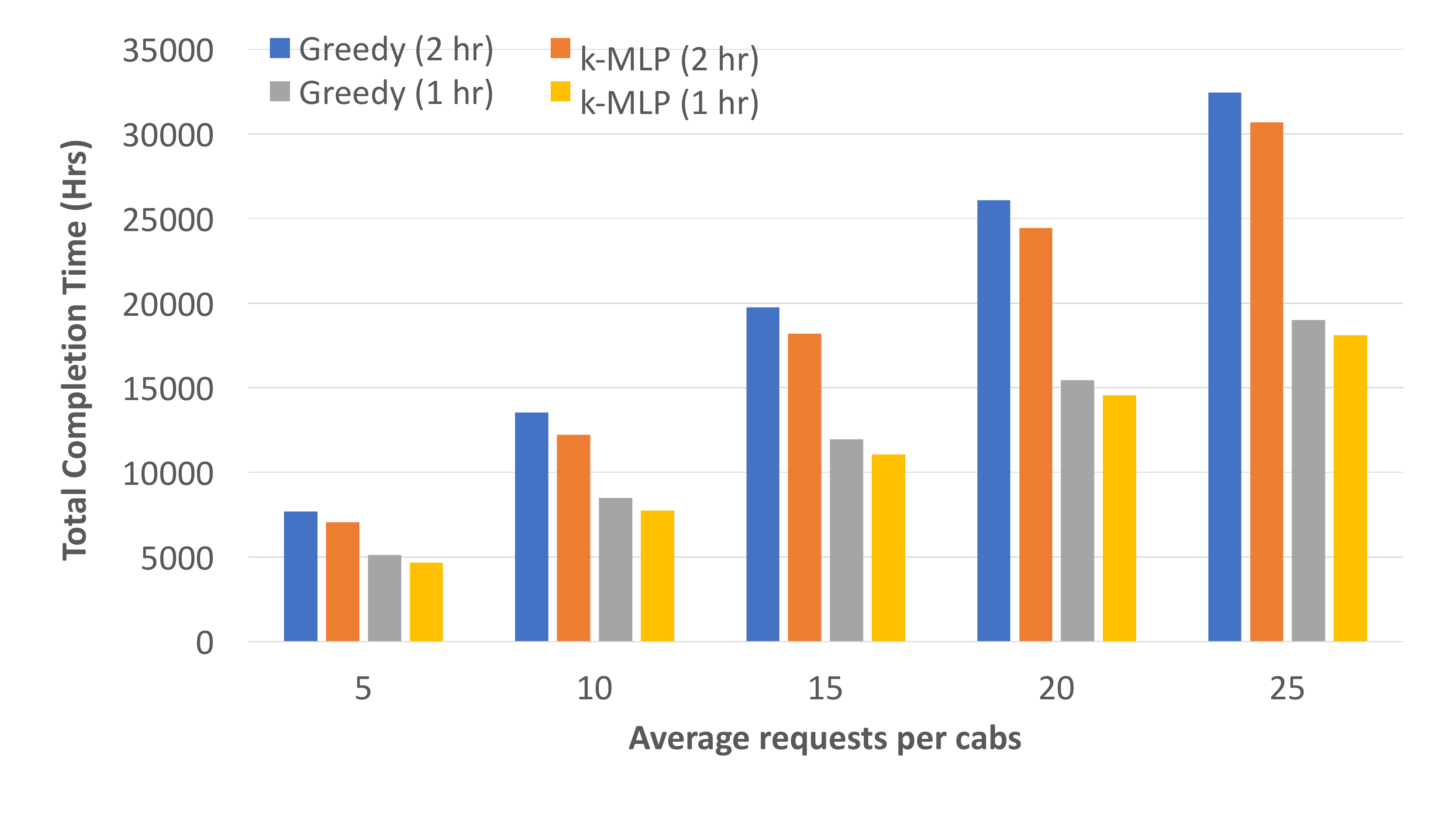}
\label{fig:nyc-comp}
}
\quad
\subfigure[Chicago]{
\includegraphics[scale=0.19]{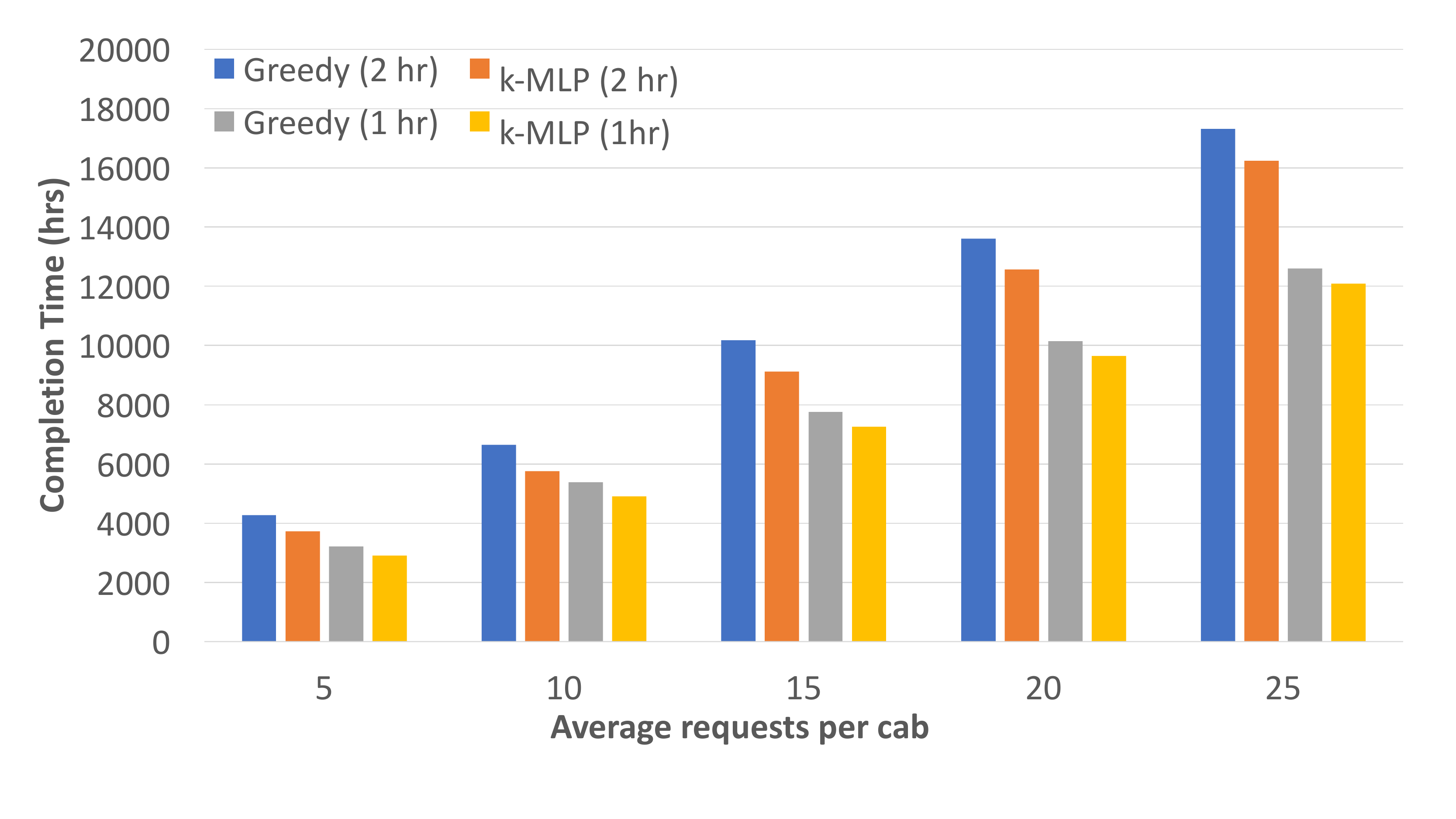}
\label{fig:chi-comp}
}
\vspace*{-2ex}
\caption{Total latency of all trips in hours}
\label{fig:comp}
\end{figure}

\if\conference1
\begin{figure}[ht]
\centering
\subfigure[New York City - total length in hours]{
\includegraphics[scale=0.19]{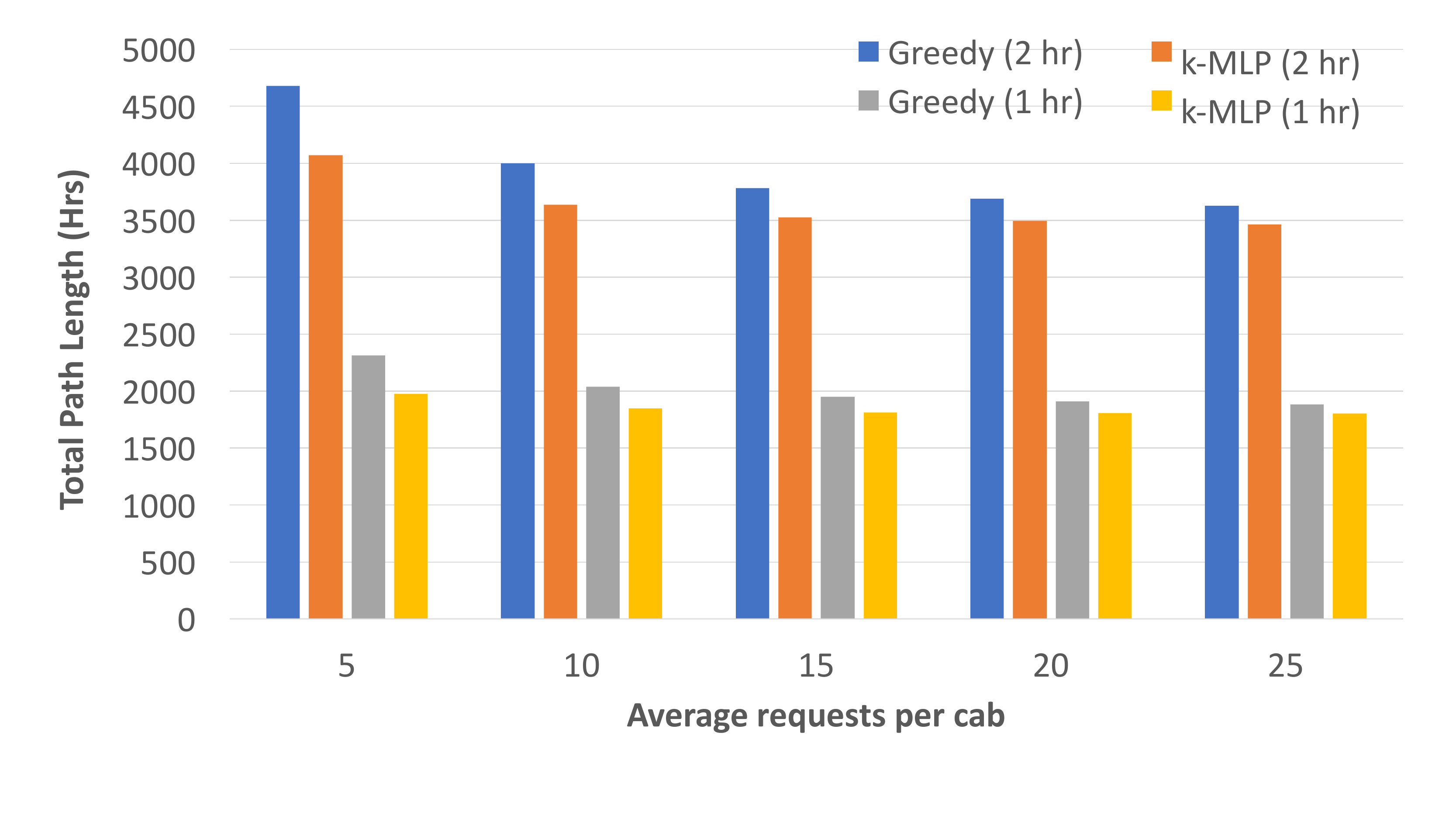}
\label{fig:nyc-len}
}
\quad
\subfigure[New York City - total idle time of cabs in hours]{
\includegraphics[scale=0.19]{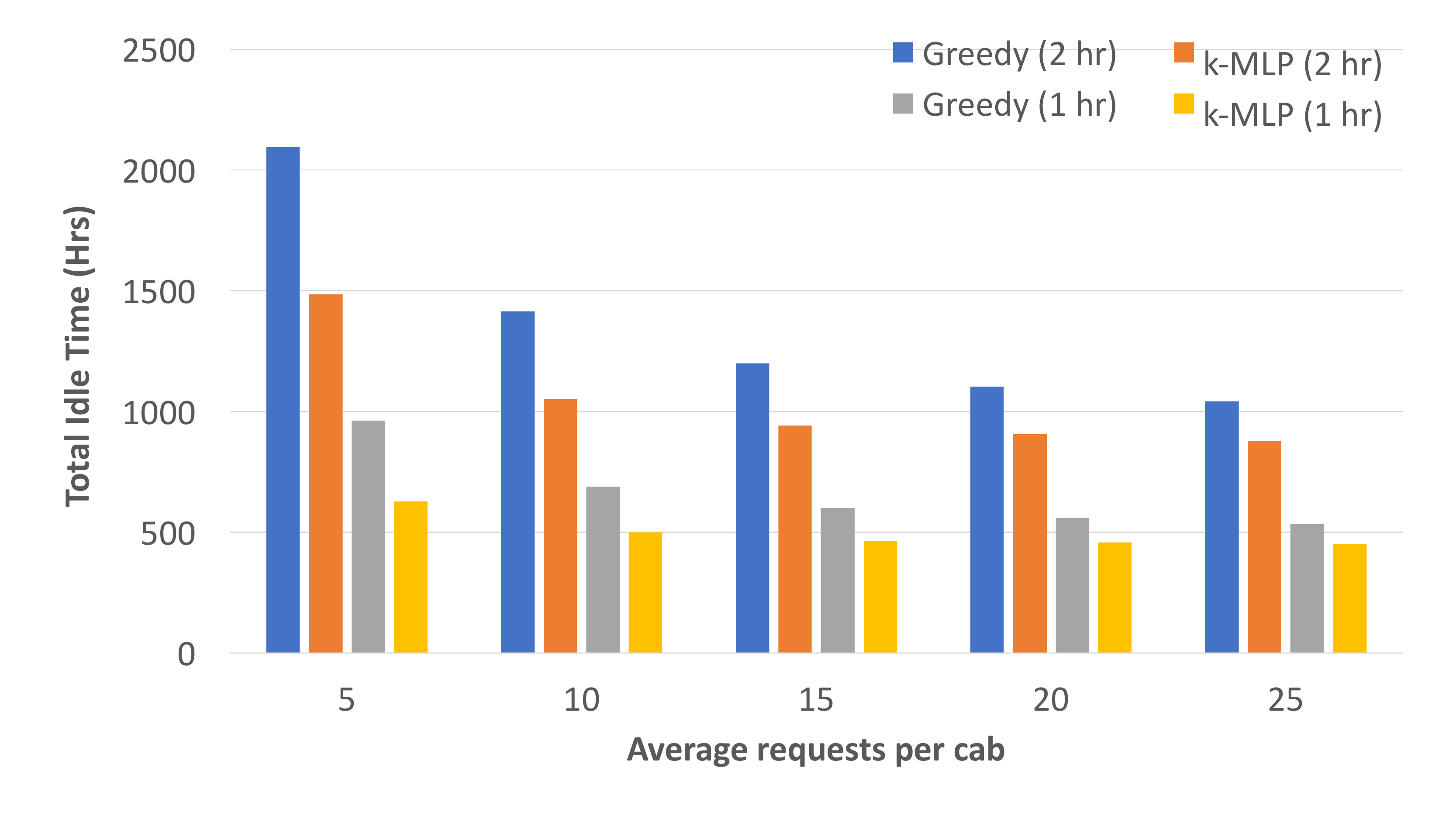}
\label{fig:nyc-len}
}
\quad
\subfigure[New York City - balanced allocations]{
\includegraphics[scale=0.19]{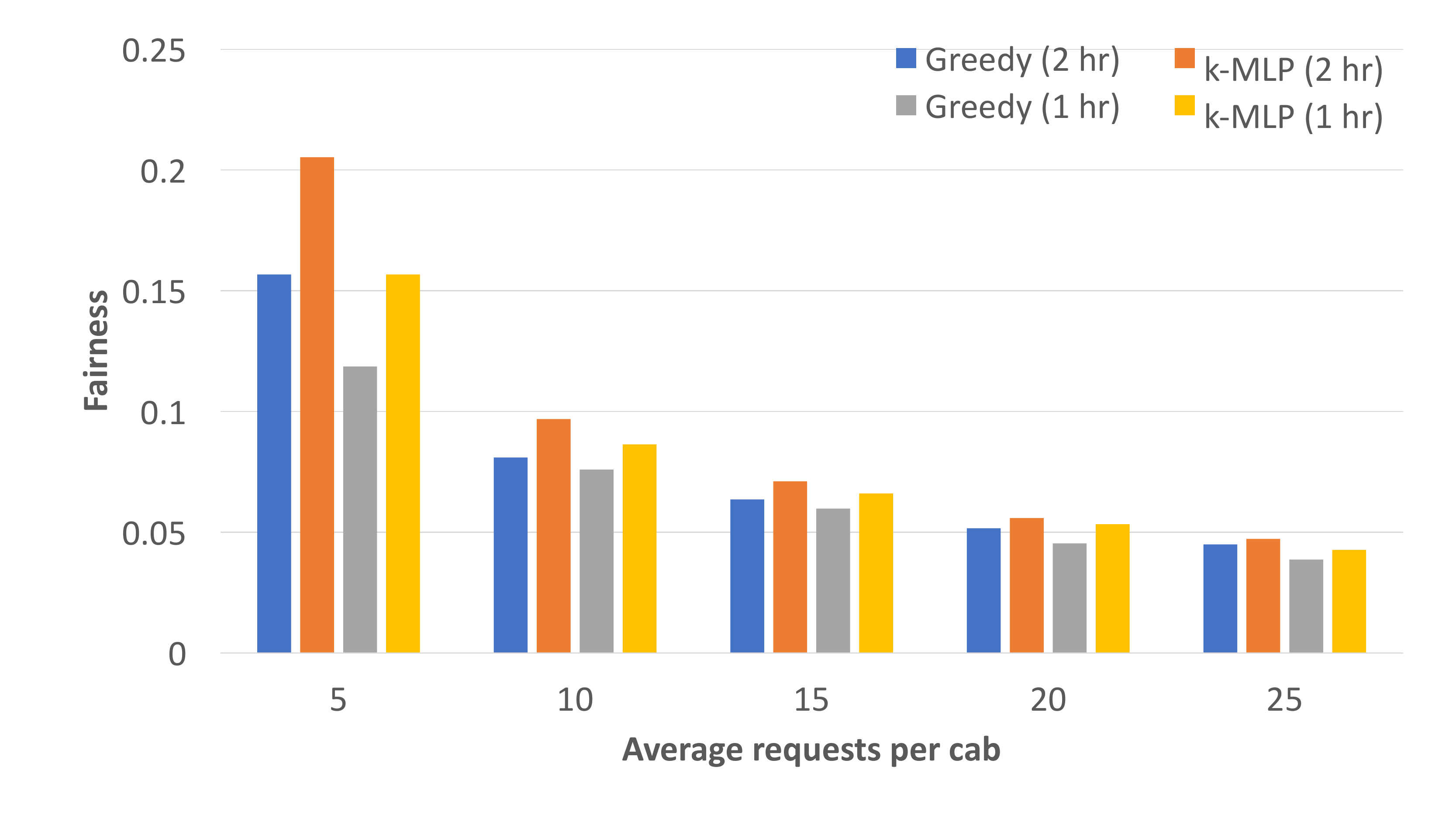}
\label{fig:nyc-len}
}
\caption{The performance of kMLP-Fast algorithm w.r.t. other natural measures.}
\label{fig:des}
\end{figure}
\else
\begin{figure*}[ht]
\centering
\subfigure[New York City - total length in hours]{
\includegraphics[scale=0.20]{nyc-length.pdf}
\label{fig:nyc-len}
}
\quad
\subfigure[Chicago - total length in hours]{
\includegraphics[scale=0.20]{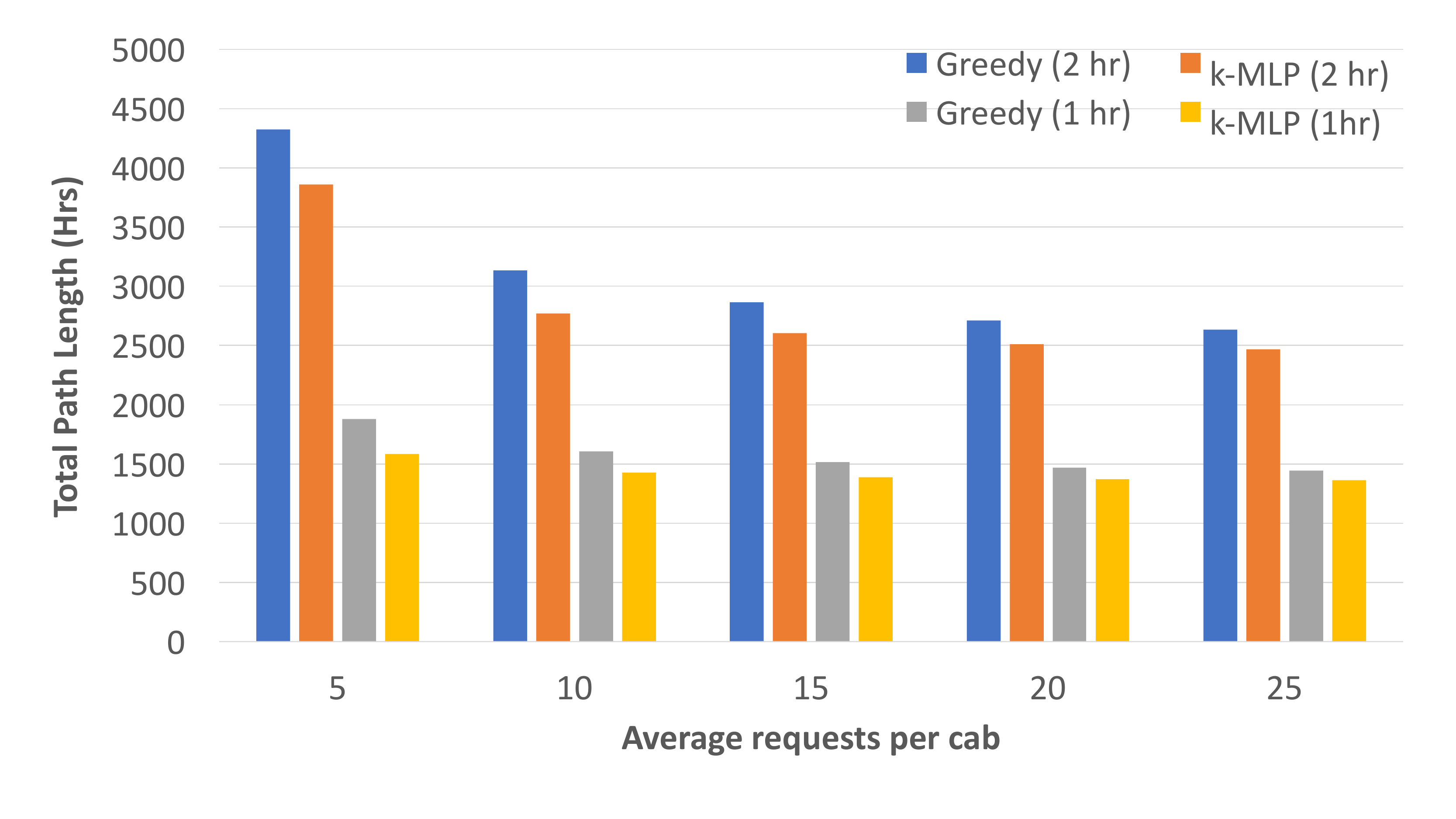}
\label{fig:chi-len}
}
\quad
\subfigure[New York City - total idle time of cabs in hours]{
\includegraphics[scale=0.20]{nyc-idle.pdf}
\label{fig:nyc-len}
}
\quad
\subfigure[Chicago - total idle time of cabs in hours]{
\includegraphics[scale=0.20]{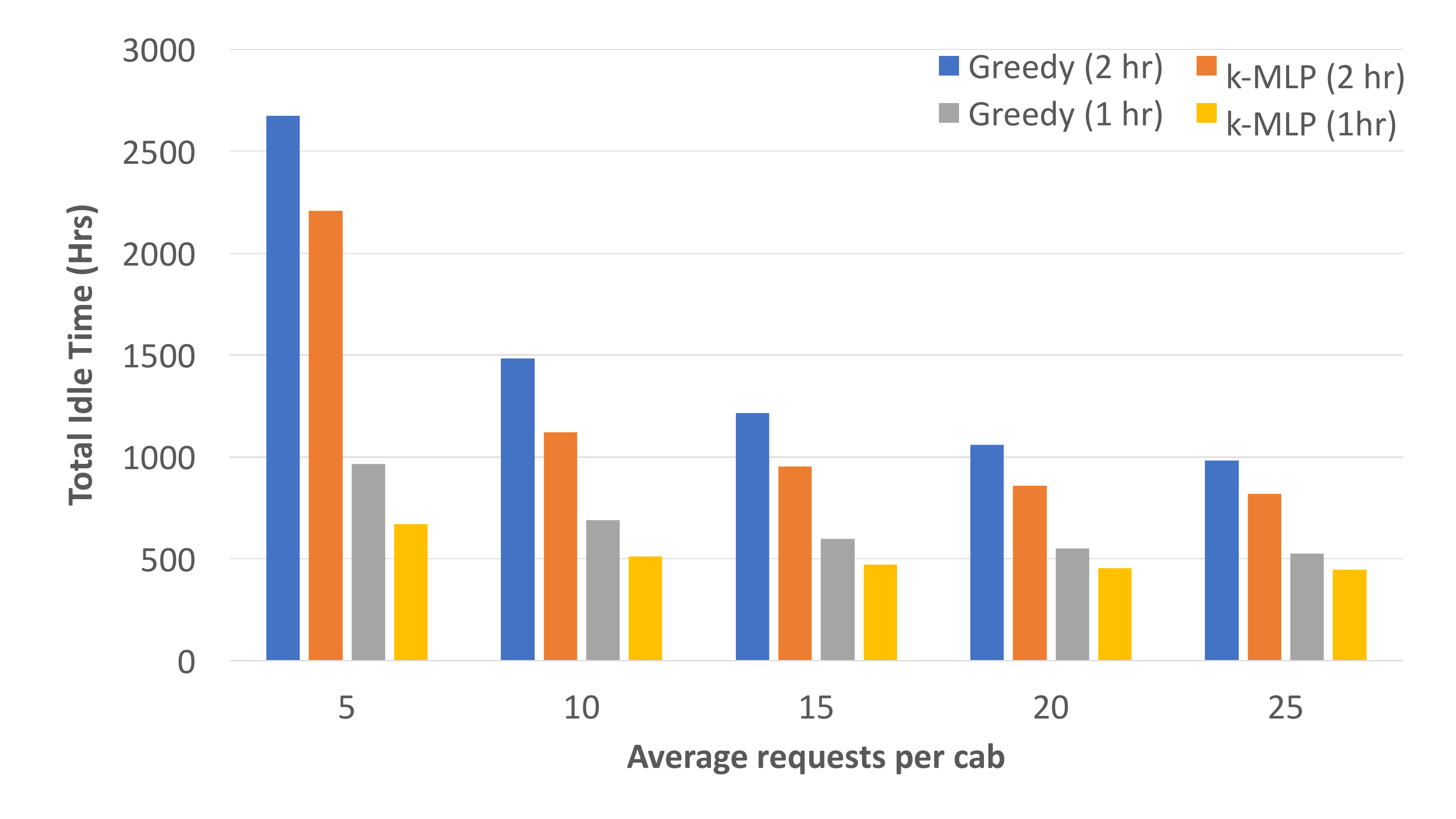}
\label{fig:chi-len}
}
\quad
\subfigure[New York City - balanced allocations]{
\includegraphics[scale=0.20]{nyc-fair.pdf}
\label{fig:nyc-len}
}
\quad
\subfigure[Chicago - balanced allocations]{
\includegraphics[scale=0.20]{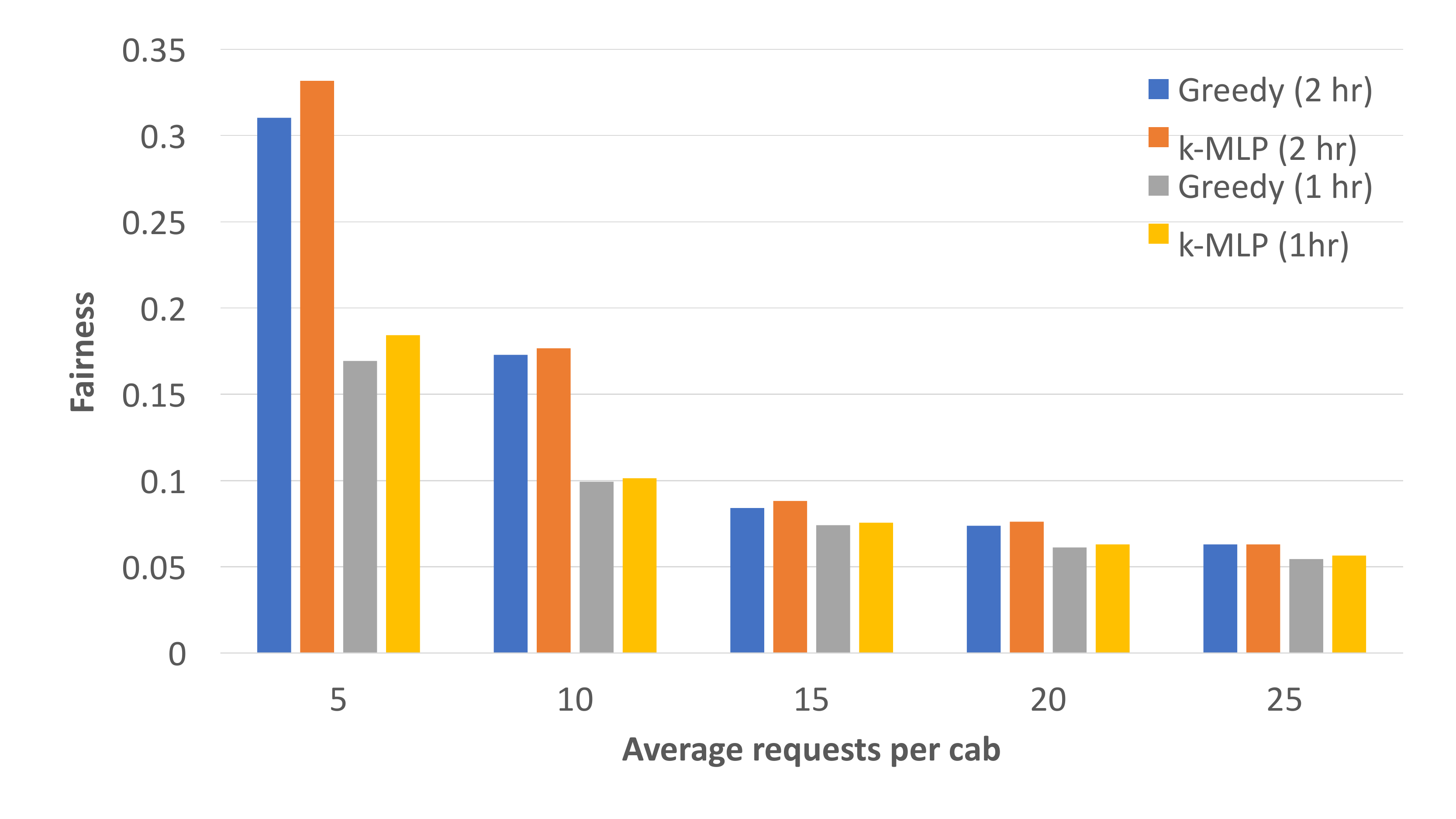}
\label{fig:chi-len}
}
\caption{The performance of kMLP-Fast algorithm w.r.t. other natural measures.}
\label{fig:des}
\end{figure*}
\fi

One trend we observe in both cities is that as the number of taxis increases, our algorithm outperforms the greedy algorithm by a wider margin. Indeed, in New York City, our algorithm does better by around 5\% - 8\% as the average number of trips for a taxi decreases from $25$ to $5$. A similar trend, with the difference ranging from 6\% to 13\%, is observed in Chicago. The reason for the higher difference in Chicago compared to New York City might be attributed to
\if\conference0
 the different demand distributions in the two cities (see Figure \ref{fig:demands}).
\else
 the different demand distributions in the two cities.
\fi
The requests in New York are more densely concentrated in a smaller region, which helps the greedy algorithm.

\subsection{Other Desiderata}
While the user satisfaction in terms of latency is a good characteristic of a ride sharing service, other characteristics such as the total distance covered by the taxis in the system, the efficiency of the taxi in terms of taxi idle times, the load on each taxi, etc are also good properties of any ride sharing service. We measured the performance of the kMLP-Fast algorithm w.r.t. these metrics. 
\if\conference1
Figure~\ref{fig:des} illustrates the relative performance of our algorithm compared to the greedy baseline for the New York City dataset. We also obtain similar performances for the Chicago dataset and provide these results in the full paper version.
\else
Figure~\ref{fig:des} illustrates the relative performance of our algorithm compared to the greedy baseline.
\fi

Even though the kMLP-Fast algorithm 
is optimized for the latency objective, it performs surprisingly well in minimizing the total distance a taxi travels to serve its rides.  In fact, the cabs travel between 5\% and 15\% less on this measure. Another promising result is the performance on the measure of idle time a taxi spends transitioning between rides. Here again, our algorithm performs significantly better by 15\% to 35\% over the greedy baseline. Finally, we measured the fairness as the coefficient of variation of the total trip lengths of taxis.  A fair allocation ensures taxis are generally equally loaded, resulting in a smaller value of this measure. On this measure, the greedy algorithm under-performs in New York City while the difference is much less in Chicago. Note that the greedy algorithm tends to produce more balanced paths by its design.
\eject
\bibliographystyle{ACM-Reference-Format}
\bibliography{refs}

\end{document}